\def\eqlaw{\stackrel{\mathrm{law}}{=}}
\theoremstyle{plain}
\newtheorem{theorem}{Theorem}[section]
\newtheorem{corollary}[theorem]{Corollary}
\newtheorem{lemma}[theorem]{Lemma}
\theoremstyle{definition}
\newtheorem{assumption}[theorem]{Assumption}
\theoremstyle{remark}
\newtheorem{remark}[theorem]{Remark}
\newcommand{\R}{\mathbb{R}}
\newcommand{\be}{\beta}
\renewcommand{\bar}{\overline}
\renewcommand{\tilde}{\widetilde}
\renewcommand{\hat}{\widehat}
\newcommand{\dd}{\mathrm{d}}
\newcommand{\bE}{\mathbb{E}}
\newcommand{\bP}{\mathbb{P}}
\newcommand{\cN}{\mathcal{N}}
\newcommand{\cF}{\mathcal{F}}
\newcommand{\cO}{\mathcal{O}}
\newcommand{\1}{\mathbf{1}}
\newcommand{\Sbs}{\mathcal{S}_\mathrm{BS}}
\newcommand{\Sloc}{\mathcal{S}_\mathrm{loc}}
\numberwithin{equation}{section}
\def\bh{\bar{h}}
\def\eps{\varepsilon}
\def\s{\sigma}
\def\W{\mathrm W}
\def\V{\mathrm V}
\def\v{\mathrm v}
\def\WW{\mathbf W}
\def\VV{\mathbf V}
\def\h{\mathrm h}
\def\M{\mathcal M}
\def\maj{>}
\def\mino{<}
\def\be{\begin{equation}}
\def\ee{\end{equation}}
\def\sigDup{\sigma_\mathrm{loc}}
\def\sigBS{\sigma_\mathrm{BS}}
\begin{document}
\title{\textbf{Local volatility under rough volatility}}


\author[1,2]{
	F.\ Bourgey\thanks{florian.bourgey@polytechnique.edu}}
\author[1]{
	S.\ De Marco\thanks{stefano.de-marco@polytechnique.edu}}
\author[3]{
	P.\ K.\ Friz\thanks{friz@math.tu-berlin.de}}
\author[4]{
	P.\ Pigato \thanks{paolo.pigato@uniroma2.it}}
\affil[1]{Centre de Math\'ematiques Appliqu\'ees (CMAP), CNRS, Ecole Polytechnique, 
Institut Polytechnique de Paris, France
}
\affil[2]{Bloomberg L.P., Quantitative Research, London, UK}
\affil[3]{Technische Universit\"{a}t Berlin and Weierstra{\ss}-Institut, Berlin, Germany}
\affil[4]{Department of Economics and Finance, Universit\`a Roma Tor Vergata, Rome, Italy}

\maketitle

\begin{abstract}
\noindent Several asymptotic results for the implied volatility generated by a rough volatility model have been obtained in recent years
(notably in the small-maturity regime), providing a better understanding of the shapes of the volatility surface induced by rough volatility models,
supporting their calibration power to SP500 option data.
Rough volatility models also generate a local volatility surface, via the so-called Markovian projection of the stochastic volatility.
We complement the existing results on implied volatility by studying
the asymptotic behavior of the local volatility surface generated by a class of rough stochastic volatility models, encompassing the rough Bergomi model.
Notably, we observe that the celebrated ``$1/2$ skew rule" linking the short-term at-the-money skew of the implied volatility to the 
short-term at-the-money skew of the local volatility, a consequence of the celebrated ``harmonic mean formula''
of [Berestycki, Busca, and Florent, QF 2002], 
is replaced by a new rule: the ratio of the at-the-money implied and local volatility
skews tends to the constant $1/(H + 3/2)$ (as opposed to
the constant $1/2$), where $H$ is the regularity index of the underlying instantaneous volatility process.
\end{abstract}


\section{Introduction}\label{sec:intro}
In rough stochastic volatility models, volatility is driven by a fractional noise, in the rough regime of Hurst parameter $H$ less than 1/2. With no claim of completeness, we mention econometric evidence \cite{gatheral2018volatility,FTW19,BLP16},  market microstructure foundations \cite{el2018microstructural}, efficient numerical methods and simulations schemes \cite{bayer2020,BLP15,McCrickerdPakkanen2018,euch2016characteristic}, including deep learning algorithms \cite{bayer2019deep, goudenege}.

This work is concerned with pricing under rough volatility, a key feature of which, well-adapted to the steep volatility skews seen in Equity option markets, is the power-law behavior
of the short-dated {\em implied volatility} at-the-money (ATM) skew: 
$$
                \Sbs \sim \mathrm{(const)} t^{H-1/2} \,;
$$
references include \cite{alos2007short, fukasawa2011asymptotic, bayer2016pricing, bayer2017short, forde2017asymptotics, euch2018short, fukasawa2017short, friz2021short, GULISASHVILI20203648, fukasawa2020,BHP21}.
More specifically, we consider here - to the best of our knowledge for the first time - the Dupire {\em local volatility} \cite{dupire1996unified,dupire1994pricing} generated by rough volatility models, using a Gyongy-type projection \cite{GyongyMimicking, BrunickShreve} and study its short-time behavior in a large deviations regime. In the context of {\em implied volatility}, such a regime was pioneered in \cite{forde2017asymptotics}, with instantaneous stochastic volatility process given as $v(t,\omega) = \sigma( W^H_t (\omega))$, i.e., as some explicit function of a fractional Brownian motion (fBm). In particular, no rough or Volterra stochastic differential equations need to be solved. We shall work under the same structural assumption as \cite{forde2017asymptotics}, although under less restrictive growth assumptions, such as to include the popular rough Bergomi model \cite{bayer2016pricing} where $\sigma$ has exponential form.  Postponing detailed recalls and precise definitions, our main result (Theorem \ref{th:asymptotic:markovian:projection}) states that, 
$$
       \sigDup \bigl(t, y \, t^{1/2-H} \bigr) 
\to \sigma \bigl(\, \hat{h}_1^y \, \bigr)
\quad \mbox{ as } t \downarrow 0,
$$
where $\hat{h}^y$ is related to a minimization problem, similar to a geodesic in Riemannian geometry.
Our analytic understanding is sufficiently fine to exploit it on the one hand for numerical tests  (discussed in Section \ref{sec:num}) 
and on the other hand to derive further analytic results (formulated in Sections \ref{sec:modeling_framework} and \ref{sec:main_results}, 
with proofs left to Section \ref{sec:proofs}) 
including the blowup, when $H < 1/2$, of the {\em local volatility skew} in the short-dated limit,
$$
                \Sloc \sim \mathrm{(const)} t^{H-1/2},
$$
see Corollary \ref{corollary:Hplus32} below for a precise statement and information on the constant.
This finding is consistent with \cite{fukasawa2017short} where it is shown, amongst others, that in ``regular'' local-stochastic vol models, which amounts to a regularity assumption on $\sigDup$, the implied volatility skew does not explode. 
The regularity of $\sigDup$ is violated here in the sense that $\Sloc$ is infinite at $t=0$. This is also consistent with \cite{pig19,fps2020} where it is shown that a  ``singular'' $\sigDup$ can indeed produce exploding implied skews. 

A further interesting consequence, also part of Corollary \ref{corollary:Hplus32}, is then that the $1/2$-rule of thumb from practitioners \cite{Derman96}
(see also \cite{gatheral2006volatility} and \cite[Remark 3.4]{fukasawa2020} for different proofs)
actually {\em fails} and is replaced, again in the short-dated limit, by what we may call the $1/(H+3/2)$-rule,
\be \label{e:skew_rule_intro}
\frac{\Sbs}{\Sloc} 
\to
\frac{1}{H+3/2}. 
\ee
As a sanity check, for Hurst parameter $H=1/2$ we are in a diffusive regime and then indeed fall back to the $1/2$-rule.

{\bf Techniques and further discussion.}
Our analysis is based on a mixture of {\it large deviations} (see e.g.\ \cite{friz2015large} for a recent collection with many references),  {\em Malliavin calculus} \cite{bally:inria-00071868,Nualart:06,fournie2001applications}, and last not least ideas from {\em rough paths} and {\em regularity structures} techniques, following \cite{bayer2017regularity, friz2018precise1,friz2021short}; see also Section 14.6 in \cite{friz2020course}. In order to deal with $H<1/2$, we cannot rely on previously used methods in diffusion settings such as \cite{takanobu1993asymptotic, de2018local}. Local volatility in classical stochastic volatility models, including Heston, is discussed in many books on volatility modeling, \cite{gatheral2006volatility} remains a key reference. Rigorous asymptotic results include \cite{ghlow, de2013rational,de2018local}. In affine forward variance models, including rough Heston \cite{euch2016characteristic, gatheral2019affine}, it is conceivable that saddle-point-based techniques, in the spirit of \cite{de2013rational} could be employed to study local volatility asymptotics. The bottleneck in such an approach seems however the lack of explicit knowledge of the moment-generating function, only given implicitly via convolution Riccati equations. 
We note that the recent preprint \cite{AlosEtAl2022} confirmed the asymptotic result \eqref{e:skew_rule_intro} using some representations of  $\Sbs$ and $\Sloc$ based on Malliavin calculus,
in a central limit (Edgeworth) regime, as opposed to our large deviations setting.

\medskip

{\bf Acknowledgements:} PKF gratefully acknowledges financial support from European Research Council (ERC) Grant CoG-683164 and German science foundation (DFG) via the cluster of excellence MATH+, project AA4-2.
SDM gratefully acknowledges financial support from the research projects Chaire Risques Financiers (\'{E}cole Polytechnique, Fondation du Risque and Soci\'{e}t\'{e} G\'{e}n\'{e}rale) and Chaire Stress Test, Risk Management and Financial Steering (\'{E}cole Polytechnique, Fondation de l'\'{E}cole Polytechnique and BNP Paribas). PP gratefully acknowledges financial support from INDAM-GNAMPA.
We thank Martin Forde, Masaaki Fukasawa, Paul Gassiat, Antoine Jacquier, Fabio Mercurio, Olivier Prad\`ere, Sergio Pulido, Adil Reghai, Mathieu Rosenbaum and Guillaume Sebille for feedback and stimulating discussions on the subject of this article, and Andrea Pallavicini and Riccardo Longoni for their inspiring comments and insights.

\section{The modeling framework}\label{sec:modeling_framework}
\noindent We assume $S_0=1$ and that the log price $X_t := \log S_t$ satisfies
\begin{equation}\label{eq:rvol:model}
\begin{split}
\dd X_t & = -\frac12 V_t  \dd t + \sqrt{V_t} \left( \rho \, \dd W_t  + \bar{\rho} \, \dd \bar{W}_t \right)
\\
V_t &= \sigma^2(\hat{W}_t)
\end{split}
\end{equation}
with `volatility function' $\sigma: \R \to \R$. We shall assume $\sigma$ to be smooth, subject to mild growth conditions
given below, such as to cover rough Bergomi type situations where $\sigma (x) = \sigma_0 \exp (\eta x)$.


\begin{assumption}\label{assu:sigma}
	There exist $c_1,c_2,c_3,c_4 > 0$ such that for all $x\in \R$,
	\begin{align}
		c_1 e^{- c_2 |x|} &\leq \sigma( x ), \label{eq:C1}
		\\
		\sigma(x) &\leq c_3 e^{ c_4 |x|}. \label{eq:C2}
	\end{align}
\end{assumption}

We take $\rho^2+\bar{\rho}^2=1$, with $\rho \in(-1,1)$. 
We denote  $\W = (W, \bar W)$ where $W, \bar W$ are two independent 
standard Brownian motions.
These are used to construct
\begin{equation}\label{eq:RLfbm}
\tilde W_t = \bar{\rho} \bar W_t+\rho W_t \ \ \ \text{and} \ \ \    \hat W_t = (K * \dot W)_t=\int_0^t K(t,s) \dd W_s,
\end{equation}
with 
\[
K(t,s)=\sqrt{2H}(t-s)^{H-1/2} 
\qquad \mbox{for } t > s 
\]
and $K(t,s) = 0$ otherwise, so that $\tilde W$ is again a standard Brownian motion ($\rho$-correlated with $W$) whereas $\hat W$ 
is a Riemann--Liouville fBm with Hurst index $H \le 1/2$, i.e., the self-similar Gaussian Volterra process in \eqref{eq:RLfbm}.

We will use analogous notations for Cameron--Martin paths
$ \h = (h, \bar h)$, so that $\tilde h = \bar{\rho} \, \bar h+\rho \, h$, and $\hat h_t = (K^H * \dot h)_t = \int_0^t K(t,s) \dd h_s$. We denote $H^1$ the Cameron--Martin space and  $\| \cdot \|_{H^1}$ the Cameron--Martin norm $\| \h \|_{H^1}^2 = \int_0^1 (\dot h^2 +  \dot{\bar{h}}^{2}) \dd t$.

\section{Mathematical setting and results}\label{sec:main_results}
\noindent The time-scaling property of the Gaussian process	$(W, \bar W, \hat W)$ underlying the model \eqref{eq:rvol:model}
yields $X_{\eps^2} \eqlaw X_1^{\varepsilon}$ for every $\eps > 0$, where $X_1^{\varepsilon}$ satisfies
\begin{equation}\label{eq:rescX}
X_1^{\varepsilon} 
=
\int_{0}^1 \sigma \left( \eps^{2H} \hat W_s \right) \varepsilon \,
\dd \left(\bar{\rho} \, \bar W+\rho \, W\right)_s -\frac{1}{2}\varepsilon ^{2}\int_{0}^1
\sigma^2 \left(   \eps^{2H} \hat W_s\right) \dd s \,.
\end{equation} 
Forde and Zhang proved in \cite{forde2017asymptotics}, albeit under different technical conditions on the volatility function,
that a small noise Large Deviation Principle (LDP) holds for the 
family $\eps^{2 H - 1} X^{\eps}_1$ (hence for $\eps^{2 H - 1} X_{\eps^2}$) as $\eps \to 0$, with speed $\eps^{4H}$ and rate function 
\begin{align} \label{e:rateF}
\Lambda(y) & :=\inf_{\h = (h, \bar h) \in H^{1}}\left\{
\frac 12 \| \h \|_{H^1}^2 :
\int_0^t \sigma\left(\hat h_s\right) (\rho \,\dd h_s + \bar \rho \, \dd \bar h_s)
= y\right\}
= 
\frac 12  \| \h^y \|_{H^1}^2 \,,
\end{align}
where $\h^y$ is a minimizer of the control problem defining $\Lambda(y)$.
From the LDP \eqref{e:rateF}, we have
\begin{align}\label{eq:ldp:sn}
     - \eps^{4H} \log \bP \bigl( X_1^\eps \geq y \eps^{1-2H} \bigr) &\rightarrow  \Lambda(y)= \frac{1}{2} \| { \h}^y  \|^2_{H^1},  \mbox{ for } y \geq 0 \mbox{ as } \eps \downarrow 0,
\\
     - \eps^{4H} \log \bP \bigl( X_1^\eps \leq y\eps^{1-2H} \bigr) &\rightarrow \Lambda (y)=
     \frac{1}{2} \| { \h}^y  \|^2_{H^1}, \mbox{ for } y \leq 0, \mbox{ as } \eps \downarrow 0 \,,
\end{align}
and this small-noise LDP eventually translates to a short-time LDP for the process $X_{\eps^2}$.
This result was proved in the case where $V_t = \sigma(\hat{W}_t)$ in \cite{forde2017asymptotics}, and then extended to the possible time dependence of the form $V_t = \sigma(\hat{W}_t,t^{2H})$ in \cite[Section 7.3]{friz2018precise1} (see also Remark \ref{rm:timedependence} below).

The short-time result for call and put prices reads as follows (see \cite[Corollary 4.13]{forde2017asymptotics})
\begin{align}
\label{eq:ldp:call}
     - t^{2H} \log \bE \bigl[ (e^{X_t} - e^{y \, t^{1/2-H}})^+ \bigr] \rightarrow \Lambda(y)= \frac{1}{2} \| { \h}^y  \|^2_{H^1} , \mbox{ for } y > 0 \mbox{ as } t \downarrow 0,
\\
\label{eq:ldp:put}
     - t^{2H} \log \bE \bigl[ (e^{y \, t^{1/2-H}}-e^{X_t})^+ \bigr] \rightarrow \Lambda(y)= \frac{1}{2} \| { \h}^y  \|^2_{H^1}, \mbox{ for } y < 0 \mbox{ as } t \downarrow 0 \,,
\end{align}
where $\h^y$ is as above.
Let us also recall that these option price asymptotics imply the following asymptotic formula for the Black--Scholes implied volatility 
(notation: $\sigma_{BS}$), which can be seen as a ``rough'' version of the Berestycki--Buscat--Florent (BBF)
formula \cite{berestycki2004computing}:
\begin{equation}\label{eq:asy:bs}
\sigBS^2(t, yt^{1/2-H} ) \to \chi^2(y):=\frac{y^2}{2\Lambda(y)}\quad 
\mbox{ for } y \neq 0 
\mbox{ as } t\downarrow 0.
\end{equation}

\begin{remark}[Precise conditions for the LDP, call price asymptotics and implied volatility asymptotics]\label{rm:condLDP}
The exponential growth condition \eqref{eq:C2} is no obstruction for an LDP to hold for the model \eqref{eq:rvol:model}, as was shown in \cite{bayer2017regularity, gulisashvili2017large}, 
weakening the linear growth condition first required in \cite{forde2017asymptotics}.
Moreover, while the put price asymptotics \eqref{eq:ldp:put} always holds, the unboundedness of the call option payoff requires some additional 
condition for \eqref{eq:ldp:call} to hold: with reference to \cite[Assumption A2]{friz2018precise1}, we will assume the following ``$1+$ moment condition'' whenever necessary:
\begin{assumption}
	\label{assu:moment}
	There exists $p \maj 1$ such that $\limsup_{\eps \to 0} \bE[e^{p X^\eps_1}] \mino \infty$.
\end{assumption}
Following \cite[Lemma 4.7]{friz2018precise1},  Assumption \ref{assu:moment} is true under the following stronger, but more explicit, condition: the process $S_t = e^{X_t}$ is a martingale, and there exist $p \maj 1$ and $t \maj 0$ such that $\bE [S_t^p] \mino \infty$.
It is known that such a condition on the moments of $e^{X_t}$ is satisfied when $\sigma$ has linear growth, cf.\ \cite{forde2017asymptotics}, while in the case $H=1/2$, the same is true under much weaker assumptions ($\sigma$ of exponential growth and $\rho <0$ is enough, see \cite{sin1998complications, jourdain2004loss}). We expect similar results to hold for $H<1/2$, but they have not been proved yet; see the partial results available in \cite{gassiat2018martingale, GULISASHVILI20203648}. 
\end{remark}

The Markovian projection of the instantaneous variance $V_t$ 
(see \cite{GyongyMimicking},\cite[Corollary 3.7]{BrunickShreve}) 
within the model \eqref{eq:rvol:model} is defined by
\be 
\label{e:MarkovProj}
\sigDup^2(t, k) :=
\bE \bigl[
V_t |X_t = k
\bigr] 
\qquad \text{for every } t \maj 0 \text{ and } k \in \R.
\ee
It follows from references \cite{GyongyMimicking, BrunickShreve} that the dynamics of the resulting local volatility model are weakly well-posed; see also \cite{friz2014make} for a generic regularization scheme obtained by time-shifting the local volatility surface (a procedure that we do not require here). 

We now present our main result. We prove that the local volatility function \eqref{e:MarkovProj} satisfies the following  short-time asymptotics.

\begin{theorem}[Markovian projection at the LDP regime] \label{th:asymptotic:markovian:projection}
Let Assumption \ref{assu:sigma} be in force.
Then, the Markovian projection in the model  \eqref{eq:rvol:model} satisfies, for every $y\in\R\setminus\{0\}$ small enough,
\begin{equation}\label{eq:asy:loc}
\sigDup^2 \bigl(t, y \, t^{1/2-H} \bigr) =
\mathbb{E}\bigl[\bigl.V_{t}\bigr|X_{t}=yt^{1/2-H}\bigr]
\to \sigma^2 \Bigl(\hat{h}_1^y \Bigr)\quad \mbox{ as } t\downarrow 0 \,,
\end{equation}
where we recall that $\hat h^y_t = \int_0^t K(t,s) \dd h^y_s$ 
and 
$\h^y = (h^y, \bar h^y)$ is the minimizer of the rate function in \eqref{e:rateF}.
\end{theorem}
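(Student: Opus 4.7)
The plan is to use the self-similarity already exploited in \eqref{eq:rescX} to turn the short-time limit into a small-noise one, and then to evaluate the resulting conditional expectation by a Laplace-type expansion on Wiener space around the LDP minimizer.

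Setting $\eps = \sqrt{t}$, the scaling identity underlying \eqref{eq:rescX}, together with the self-similarity of $\hat W$, gives in law
\begin{equation*}
\sigDup^2\bigl(t, y t^{1/2-H}\bigr)
= \bE\bigl[\sigma^2(\eps^{2H}\hat W_1) \bigm| X_1^\eps = y\eps^{1-2H}\bigr].
\end{equation*}
The task therefore reduces to computing this conditional expectation as $\eps \downarrow 0$, with the conditioning level $y\eps^{1-2H}$ sitting precisely at the LDP scale of \eqref{eq:ldp:sn}.

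I would then write this conditional expectation, informally, as a ratio of weighted densities,
\begin{equation*}
\bE\bigl[\sigma^2(\eps^{2H}\hat W_1) \bigm| X_1^\eps = y\eps^{1-2H}\bigr]
= \frac{\bE\bigl[\sigma^2(\eps^{2H}\hat W_1)\, \delta(X_1^\eps - y\eps^{1-2H})\bigr]}{p_{X_1^\eps}(y\eps^{1-2H})},
\end{equation*}
and apply sharp Laplace asymptotics to the numerator and the denominator separately. Such precise large-deviation expansions in the rough-volatility setting have been developed in \cite{bayer2017regularity, friz2018precise1, friz2021short} via Malliavin calculus and ideas from rough paths / regularity structures, extending the Ben Arous / Azencott expansion to the non-Markovian Volterra kernel $K$. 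Applied here, they show that numerator and denominator share the same leading exponential rate $\exp(-\Lambda(y)/\eps^{4H})$ and that, by Laplace's method, both concentrate on arbitrarily small $H^1$-neighborhoods of the unique minimizer $\h^y$ of \eqref{e:rateF}. After the Cameron--Martin tilt by $\eps^{-2H}\h^y$, the integrand $\eps^{2H}\hat W_1$ becomes $\hat h_1^y + \eps^{2H}\hat W_1 \to \hat h_1^y$ in probability, so that continuity of $\sigma^2$ lets $\sigma^2(\hat h_1^y)$ come out at leading order:
\begin{equation*}
\bE\bigl[\sigma^2(\eps^{2H}\hat W_1)\, \delta(X_1^\eps - y\eps^{1-2H})\bigr]
= \sigma^2\bigl(\hat h_1^y\bigr)\, p_{X_1^\eps}(y\eps^{1-2H})\, (1+o(1)),
\end{equation*}
and dividing by the denominator yields the claim.

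The two main technical obstacles will be the following. First, one must verify that $\h^y$ is the unique, non-degenerate minimizer of $\Lambda$ for $0 < |y|$ small enough, as required for a single-critical-point Laplace expansion; since at $y=0$ the unique minimizer is the zero path and the Hessian of the constraint map is non-degenerate, this should follow by an implicit-function / perturbation argument. Second, one has to control the exponentially growing factor $\sigma^2(\eps^{2H}\hat W_1)$ in the expectation: unlike the linear-growth setting of \cite{forde2017asymptotics}, this factor is unbounded, so one must combine the upper bound in Assumption \ref{assu:sigma} with Gaussian tails of $\hat W_1$, and -- for $y>0$ -- with Assumption \ref{assu:moment}, to secure the uniform integrability needed both to insert the observable $\sigma^2(\eps^{2H}\hat W_1)$ into the Laplace expansion and to discard the contributions coming from outside a neighborhood of $\h^y$. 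A Malliavin-type representation of $p_{X_1^\eps}$ in the spirit of \cite{bally:inria-00071868} is the natural tool for executing this last step. This control of the unbounded observable is, in my view, the hard part of the proof.
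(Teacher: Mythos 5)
Your overall strategy coincides with the paper's: rescale to the small-noise regime, localize on a neighborhood of the LDP minimizer $\h^y$ via a sharp Laplace argument, and evaluate the observable at $\hat h_1^y$. The divergence is in the representation of the conditional expectation. You propose the ratio of a Dirac-weighted expectation over the marginal density $p_{X_1^\eps}(y\eps^{1-2H})$, which would require establishing sharp density asymptotics. The paper instead applies a one-shot Malliavin integration by parts (its Lemma \ref{lemma:cond:exp}) to obtain
\begin{equation*}
\bE[V_T \mid X_T=y] = \frac{\bE\bigl[V_T\, \1_{X_T\geq y}\int_0^T\frac{1}{\bar\rho\,\sigma(\hat W_t)}\,\dd\bar W_t\bigr]}{\bE\bigl[\1_{X_T\geq y}\int_0^T\frac{1}{\bar\rho\,\sigma(\hat W_t)}\,\dd\bar W_t\bigr]}\,,
\end{equation*}
so numerator and denominator become expectations of indicator-plus-Skorohod-weight functionals. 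This converts the conditioning into a tail event, for which the LDP is already stated in \eqref{eq:ldp:sn}, and lets the Laplace expansion — localization on $\{\eps^{2H}|||\WW|||<\delta\}$, discarding the complement (Lemma \ref{prop:algebraic:equivalence}), Girsanov tilt, stochastic-Taylor expansion of $\sigma^2$ and of the Skorohod weight — run essentially unchanged from \cite{friz2018precise1}. Your density route is viable in principle, but it replaces the Skorohod-weighted tail expectations (for which H\"older, Burkholder--Davis--Gundy and log-normal moment bounds do all the work) with a density expansion that would need to be proved separately.

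One inaccuracy worth fixing: you suggest that Assumption \ref{assu:moment} is needed for $y>0$. It is not, and the paper emphasizes this immediately after the theorem. That $1+$ moment condition on $e^{X^\eps_1}$ exists to control the unbounded \emph{call} payoff $(e^{X_t}-K)^+$; here the numerator involves only the bounded indicator, the Skorohod weight, and $V_T=\sigma^2(\hat W_T)$, whose growth is driven by the Gaussian $\hat W$ and tamed by Assumption \ref{assu:sigma} alone — no moment of $e^{X}$ ever enters. Dropping Assumption \ref{assu:moment} from the hypotheses is precisely one of the payoffs of the Malliavin IBP representation. Your other technical flags are on target: uniqueness/non-degeneracy of $\h^y$ for small $|y|$ is invoked in the paper by citing \cite[Section 7.1]{friz2018precise1}, and the tail control of the exponentially growing observable is exactly what Lemma \ref{prop:algebraic:equivalence} handles.
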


The uniqueness of the minimizer for the control problem \eqref{e:rateF} is proved in \cite[Lemma C.6]{friz2018precise1}.
Let us stress that the asymptotics \eqref{eq:asy:loc} for the local volatility function holds under 
the mild growth conditions
of Assumption \ref{assu:sigma}, while we do not require the $1+$ moment condition of Assumption \ref{assu:moment}. 

\subsection{Local volatility skew and the new $\frac{1}{H+3/2}$ rule} 
Let us write $\sim$ for asymptotic equivalence as $t\to 0$.  Let us denote
\[
\Sigma(y) := \sigma \Bigl(\hat{h}_1^y \Bigr)
\]
the limiting function in \eqref{eq:asy:loc}, and consider the following finite-difference approximations of the local and
implied volatility skew
\begin{align}\label{eq:locvolskew}
\Sloc(t, y) 
&:=
\frac{
	\sigDup(t, y \, t^{1/2 - H} ) - \sigDup(t, -yt^{1/2 - H})
} {2y \, t^{1/2 - H}},
\\
\Sbs(t, y)
&:=
\frac{
	\sigBS(t, y \, t^{1/2 - H} ) - \sigBS(t, -yt^{1/2 - H})
} {2y \, t^{1/2 - H}}. 
\end{align}
Then, we have the following

\begin{corollary}[Local vol skew and the new $\frac{1}{H+3/2}$ rule]\label{corollary:Hplus32}
Let $\rho\neq 0$. Let Assumption \ref{assu:sigma} be in force.
Then, for $y\in\R\setminus\{0\}$ small enough,
\begin{equation}\label{eq:ldp:loc:vol:skew}
\Sloc(t, y)
\sim 
\frac{\Sigma(y) - \Sigma(-y)}{2y} \frac 1 {t^{1/2 - H}}
\end{equation}
as $t \to 0$.
Under the additional moment condition in Assumption \ref{assu:moment},
\begin{equation}\label{eq:1_H32rule}
\frac{\Sbs(t, y)}
{\Sloc(t, y)}
\xrightarrow{t\to 0}
\frac{\chi(y) - \chi(-y)}
{\Sigma(y) - \Sigma(-y)}
\xrightarrow{y\to 0}
\frac{1}{H+3/2}.
\end{equation}
In the case $\rho = 0$, we have $\Sbs(t, y) = 0$ and $\Sloc(t, y) = 0$ for every $t$.
\end{corollary}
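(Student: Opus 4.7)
The first claim \eqref{eq:ldp:loc:vol:skew} is immediate from Theorem \ref{th:asymptotic:markovian:projection}: applying \eqref{eq:asy:loc} at $\pm y$ and inserting into the finite difference defining $\Sloc(t, y)$ yields the stated leading behavior as $t \downarrow 0$. Under Assumption \ref{assu:moment}, the implied-volatility asymptotic \eqref{eq:asy:bs} becomes available, and the same manipulation gives $\Sbs(t, y) \sim t^{H-1/2}(\chi(y) - \chi(-y))/(2y)$ with $\chi(y) := |y|/\sqrt{2\Lambda(y)}$. The ratio $\Sbs(t, y)/\Sloc(t, y)$ therefore tends, as $t \downarrow 0$, to $(\chi(y) - \chi(-y))/(\Sigma(y) - \Sigma(-y))$, and the $1/(H+3/2)$-rule reduces to the small-$y$ limit of this quotient, equivalently to a Taylor expansion around $y = 0$ of the minimizer $\h^y$ of the rate function \eqref{e:rateF}.

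The plan for that expansion is a perturbative analysis of the constrained variational problem. I would write $\h^y = y\xi + y^2\eta + O(y^3)$ with $\xi, \eta \in H^1$ (pairs of Cameron--Martin paths). The It\^o map admits the Taylor expansion $\Phi_1(\h) = \sigma(0)\tilde h_1 + \sigma'(0)\int_0^1 \hat h_s \, \dd\tilde h_s + O(\|\h\|_{H^1}^3)$, so the constraint $\Phi_1(\h^y) = y$ reads at order $y$: $\sigma(0)\tilde\xi_1 = 1$, and at order $y^2$: $\sigma(0)\tilde\eta_1 + \sigma'(0)\int_0^1 \hat\xi_s \, \dd\tilde\xi_s = 0$. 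Minimizing $\tfrac 12 \|\xi\|_{H^1}^2$ under the first (linear) constraint via Cauchy--Schwarz/Lagrange forces the constant-derivative controls $\dot\xi \equiv \rho/\sigma(0)$ and $\dot{\bar\xi} \equiv \bar\rho/\sigma(0)$, so that
\[
\hat\xi_s = \frac{\rho\sqrt{2H}}{\sigma(0)(H+1/2)}\, s^{H+1/2}, \qquad \int_0^1 \hat\xi_s \, \dd\tilde\xi_s = \frac{\rho\sqrt{2H}}{\sigma^2(0)(H+1/2)(H+3/2)}.
\]
Setting $C_H := \sqrt{2H}/(H+1/2)$, this yields $\hat h_1^y = \rho C_H y/\sigma(0) + O(y^2)$; moreover, using $\|\h^y\|_{H^1}^2 = y^2\|\xi\|^2 + 2y^3 \ip{\xi}{\eta} + O(y^4)$ together with the identity $\ip{\xi}{\eta}_{H^1} = \tilde\eta_1/\sigma(0)$ (a direct consequence of $\dot\xi, \dot{\bar\xi}$ being constant), one obtains
\[
\Lambda(y) = \frac{y^2}{2\sigma^2(0)} - \frac{\sigma'(0)\rho C_H}{\sigma^4(0)(H+3/2)}\, y^3 + O(y^4).
\]
Plugging into $\Sigma(y) = \sigma(\hat h_1^y)$ and $\chi(y) = |y|/\sqrt{2\Lambda(y)}$ and extracting the odd $O(y)$ parts produces
\[
\frac{\Sigma(y) - \Sigma(-y)}{2y} \to \frac{\sigma'(0)\rho C_H}{\sigma(0)}, \qquad \frac{\chi(y) - \chi(-y)}{2y} \to \frac{\sigma'(0)\rho C_H}{\sigma(0)(H+3/2)},
\]
whose ratio is $1/(H+3/2)$, which is \eqref{eq:1_H32rule}.

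For $\rho = 0$, $V_t = \sigma^2(\hat W_t)$ is a functional of $W$ alone and thus independent of $\bar W$, the sole driver of $X$'s martingale part; under the stock-measure change $\dd \bar\bP/\dd \bP = e^{X_t}$, Girsanov flips the drift of $X$ from $-\tfrac 12 V \, \dd t$ to $+\tfrac 12 V \, \dd t$ while preserving the law of $V$, giving $\bE[f(-X_t)] = \bE[e^{X_t} f(X_t)]$. This yields the density symmetry $p_{X_t}(-k) = e^k p_{X_t}(k)$ (whence $\sigDup^2(t, -k) = \sigDup^2(t, k)$) and the put-call symmetry $\mathrm{Call}(k) = e^k\, \mathrm{Put}(-k)$ (whence $\sigBS(t, -k) = \sigBS(t, k)$), so both skews vanish identically. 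The main technical obstacle in the plan above is to rigorously justify the perturbative expansion, namely the $C^2$-smoothness of $y \mapsto \h^y$ near $y = 0$ with a uniform $O(y^3)$-remainder; this should follow from an implicit function theorem argument applied to the Lagrange system of \eqref{e:rateF} in $H^1 \times \R$, using non-degeneracy of the linearization $D\Phi_1(\mathbf 0)[\h] = \sigma(0)\tilde h_1$ as a bounded linear functional on $H^1$.
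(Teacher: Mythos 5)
Your proposal is correct and reaches the same formulas as the paper, but via a genuinely more self-contained route. The paper's own proof cites external references for the two key Taylor expansions — the first-order expansion of $y\mapsto\hat h^y_1$ (from \cite{friz2021short}) and the third-order energy expansion of $\Lambda(y)$ (from \cite[Thm.\ 3.4]{bayer2017short}) — and then observes the kernel identity $K1(1) = (H+3/2)\langle K1,1\rangle$, where $K1(t) = \int_0^t K(t,s)\,\dd s$. You instead derive both expansions from scratch by a perturbative analysis of the constrained variational problem \eqref{e:rateF}, expanding $\h^y = y\xi + y^2\eta + O(y^3)$ and reading off the constraints order by order. Your explicit computation of $\dot\xi \equiv \rho/\sigma(0)$, $\dot{\bar\xi} \equiv \bar\rho/\sigma(0)$ and the resulting $C_H = \sqrt{2H}/(H+1/2)$ is exactly $K1(1)$, and the extra factor $1/(H+3/2)$ appearing in $\int_0^1\hat\xi_s\,\dd\tilde\xi_s$ is precisely $\langle K1,1\rangle/K1(1)$, so you reproduce the kernel identity implicitly rather than invoking it. The nice observation that $\ip{\xi}{\eta}_{H^1} = \tilde\eta_1/\sigma(0)$ for \emph{any} $\eta$ (because $\dot\xi,\dot{\bar\xi}$ are constant), combined with the $O(y^2)$ constraint pinning down $\tilde\eta_1$, is what makes the cubic coefficient of $\Lambda$ insensitive to the non-uniqueness of $\eta$; this is a clean way to sidestep solving the second-order Euler–Lagrange equation. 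Your $\rho = 0$ argument via the share-measure change is also more concrete than the paper's appeal to a ``standard result.'' The single genuine gap you flag yourself — the $C^2$-regularity of $y\mapsto\h^y$ near $0$ with a uniform $O(y^3)$ remainder, which would follow from an IFT applied to the Lagrange system — is exactly the technical content offloaded by the paper to the two cited references; so the proposal is not complete as written, but it identifies correctly both the missing step and the right tool to close it.
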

\noindent In our numerical experiments in section \ref{sec:num}, we estimate the exact ATM local volatility skew $\frac 12 \partial_k \sigDup(t,k)\big|_{k = 0}$ in the rough Bergomi model \eqref{e:rBergomi}, and find perfect agreement with Corollary \ref{corollary:Hplus32}. The model local volatility skew can be observed in Figure \ref{fig:ATMskews}, and the ratio of the implied volatility skew over the local volatility skew in Figure \ref{fig:skews_ratio}.

\begin{remark} \label{rem:one_half_rule}
When $H=1/2$, we are back to the classical $1/2$ skew rule, see 
Derman et al.\ \cite{Derman96}.
\end{remark}

\begin{remark}
One can expect the $\frac{1}{H+3/2}$ rule \eqref{eq:1_H32rule} to hold also for rough or rough-like volatility models that do not belong to the model class \eqref{eq:rvol:model}, such as the rough Heston model \cite{euch2016characteristic}.
The recent preprint \cite{DallAcqua2022} provides numerical evidence for the $\frac{1}{H+3/2}$ rule under the lifted Heston model \cite{AbiJaber2019}, a Markovian approximation of rough Heston, as well as a formal proof in the case of the proper rough Heston model, see \cite[Proposition 2.1]{DallAcqua2022}.
In their recent work \cite{AlosEtAl2022}, Alos and co-authors prove the $\frac{1}{H+3/2}$ rule for stochastic volatility models under suitable assumptions on the asymptotic behavior of the volatility process and related iterated Malliavin derivatives, further providing an asymptotic rule for 
 the ratio of the at-the-money second derivative $\partial_{kk} (\cdot) |_{k=0}$ of the local and implied volatility functions.
\end{remark}

\begin{remark}[The short-time harmonic mean formula and the 1/2 skew rule again] \label{rem:harm_mean}
When expressed in terms of an implied volatility $\sigBS$, Dupire's formula for local volatility reads 
\be \label{e:Dup}
\sigDup(t,k)^2 = 
\frac{\sigBS(t,k) + 2 \, t \, \partial_t \sigBS(t,k)}
{\Bigl(
t \, \partial_{kk} \sigBS
- \frac 14 t^2 \, \sigBS (\partial_k \sigBS)^2
+ \frac 1 {\sigBS} \left(1 - \frac{k \, \partial_k \sigBS}{\sigBS}\right)^2 \Bigr)(t,k)} 
\ee
provided that $\sigBS$ is sufficiently smooth for all the partial derivatives to make sense.
Formally taking $t\to 0$ inside \eqref{e:Dup} and assuming that the partial derivatives $\partial_t \sigBS$, $\partial_k \sigBS$ and $\partial_{kk} \sigBS$
remain bounded,
one obtains
\begin{equation}\label{eq:ode:sigloc:sigbs}
	\sigDup(0,k)^2 = \frac{\sigBS(0,k)^2}{\bigl(1 - k \frac{\sigBS^{\prime} (0,k) }{\sigBS(0,k)}\bigr)^2}.
\end{equation}
The ordinary differential equation \eqref{eq:ode:sigloc:sigbs}
can be used to reconstruct the function $\sigBS(0, \cdot)$ from $\sigDup(0,\cdot)$ and it is solved
by
the harmonic mean function
\begin{equation}\label{eq:def:harmonic:mean}
	H(t,k) = \frac 1{\frac 1k \int_0^k \frac {\dd y}{\sigDup(t,y)}},
\end{equation}
evaluated at $t=0$.
The computation above, leading from \eqref{eq:ode:sigloc:sigbs} to \eqref{eq:def:harmonic:mean}, can be found in \cite{Lee2005}; the rigorous counterpart of this formal argument, that is the asymptotic equivalence $\sigBS(t, k) \sim H(t,k)$, 
known as the ``harmonic mean formula'' or BBF formula, was proven in \cite{BBF02}
under the assumption that the local volatility surface $\sigDup$ is bounded and uniformly continuous in a neighborhood of $t=0$.
It is straightforward to see that the harmonic mean satisfies the property $\partial_k H(t,k)\big|_{k=0} = \frac 12 \partial_k \sigDup(t,k)\big|_{k = 0}$.
Therefore, if we assume that the short-time approximation property $\sigBS(t,k) \approx H(t,k)$ also holds for the first derivatives with respect to $k$, as a consequence we obtain the $1/2$ short-time skew rule $\partial_k \sigBS(t,0) \sim 
\frac 12 \partial_k \sigDup(t,0)$ that we referred to in Remark \ref{rem:one_half_rule}.

Corollary \ref{corollary:Hplus32} entails that the formal argument above does not hold anymore for the implied and local volatility surfaces generated by a rough stochastic volatility model. 
Notably, the boundedness of the partial derivatives $\partial_t \sigBS$, $\partial_k \sigBS$ and $\partial_{kk} \sigBS$, and the uniform continuity of the local volatility surface, fall short -- but in such a way that the limit of the skew ratio $\frac{\Sbs}{\Sloc}$ can still be identified and explicitly computed (see related numerical tests in Figure \ref{fig4}).
\end{remark}

\begin{remark}[Time-dependent volatility function] \label{rm:timedependence}
The rough Bergomi model \cite{bayer2016pricing} comes with instantaneous variance
\[
\xi (t)  \exp \Bigl(\eta x - \frac{\eta^2}2	t^{2 H} \Bigr) \sim \xi (0)  \exp \bigl(\eta x  \bigr) =: \sigma^{2}(x)
\]
as $t \downarrow 0$. We could have proved Theorem \ref{th:asymptotic:markovian:projection} and Corollary \ref{corollary:Hplus32} in greater generality, with $V_t = \sigma^2(\hat{W}_t, t)$, provided the dependence with respect to $t$ in $\sigma = \sigma(x,t)$ is sufficiently smooth such as not to affect the local analysis that underlies the proof. 
This is more subtle in the case of rough Bergomi where $t^{2H}$ fails to be smooth at $t=0^+$ when $H<1/2$.
Even so, we discussed in \cite{friz2018precise1} how to adjust the arguments to obtain exact asymptotics, the same logic applies here.
\end{remark}

\section{Numerical tests}\label{sec:num}

\noindent We wish to estimate the conditional expectation \eqref{e:MarkovProj} for some specific instance of the model \eqref{eq:rvol:model}, 
using Monte Carlo simulation.
We consider the rough Bergomi model \cite{bayer2016pricing}, for which the instantaneous variance process is given by
\be \label{e:rBergomi}
V_t = \xi_0 \exp \Bigl(\eta \hat W_t - \frac{\eta^2}2	t^{2 H} \Bigr)
=  \xi_0 \exp \Bigl(\eta \int_0^t \sqrt{2H} (t-s)^{H - 1/2} \dd W_s  - \frac{\eta^2}2	t^{2 H} \Bigr),
\ee
where $\xi_0 = V_0$ is the spot variance and $\eta$ a parameter that tunes the volatility of variance.
Note that, strictly speaking, Theorem \ref{th:asymptotic:markovian:projection} and Corollary \ref{corollary:Hplus32} do not apply to the model above,
because of the time dependence in the volatility function $\sigma(x,t) = \sqrt{\xi_0} \exp \bigl(\frac{\eta}{2} x - \frac{\eta^2}4	t^{2 H} \bigr)$.
In light of the discussion in Remark \ref{rm:timedependence}, we can expect our asymptotic results to hold for such a time-dependent volatility function as well, which is
in line with the output of our numerical experiments below.

For a given time horizon $T>0$ and a number $N \in \mathbb{N}^{*}$ of time-steps, the random vector $(\log V_{t_k})_{1 \le k \le N}$, $t_k = k \frac T N$, has a multivariate Gaussian distribution with known mean and variance, see for example \cite{bayer2016pricing}, and can therefore be simulated exactly. We use the standard simulation method for Gaussian vectors based on a Cholesky factorization of the covariance matrix. Of course, this method has a considerable complexity  -- cost $\cO(N^3)$ for the Cholesky factorization  and $\cO(N^2)$ for the matrix multiplication required to get one sample of  $(V_{t_k})_{0 \leq k \leq N}$ --  but our focus is on the accuracy of our estimations, rather than on their computational time.
We construct approximate samples of  the log-asset price
$X_{T} = -\frac{1}{2}\int_{0}^{T}V_{t}\dd t + \int_{0}^{T}\sqrt{V_{t}}(\rho \dd W_{t} + \bar \rho \dd \bar{W}_{t})$
using a forward Euler scheme on the same time-grid
\[
X^N_{T} = - \frac{T}{2 N} \sum_{k=0}^{N-1} V_{t_k} 
+ \sum_{k=0}^{N-1} \sqrt{V_{t_k}} \Bigl(\rho (W_{t_{k+1}} - W_{t_k}) + \bar \rho (\overline W_{t_{k+1}} - \overline W_{t_k}) \Bigr).
\]
Therefore, we obtain $M$ i.i.d.\ approximate Monte Carlo samples $(X^{N,m}_{T}, V^m_T)_{1 \le m \le M}$ of the couple $(X_T^N, V_T)$, from which
our estimators of the implied volatility  and local volatility \eqref{e:MarkovProj} are constructed, as detailed below.
Since our goal is to check the asymptotic statements appearing in Theorem \ref{th:asymptotic:markovian:projection} and 
Corollary \ref{corollary:Hplus32},  we will consider a large number $N$ of discretization steps and a large number $M$
of Monte Carlo samples in order to increase the precision of the estimates we use as a benchmark.
We estimate out-of-the-money put and call option prices by standard empirical means and evaluate the corresponding implied volatilities
$\sigBS(T,K)$ by Newton's search.

The rough Bergomi model \eqref{e:rBergomi} parameters we used in our experiments are $S_0 = 1, \eta = 1.0, \rho = -0.7$, and $\xi_0 = 0.235^2$.
We tested three different values of $H \in (0,1/2]$, namely $H \in \{0.1, 0.3, 0.5\}$. We used $M = 1.5 \times 10^6$ Monte Carlo samples and $N = 500$ discretization points.

\begin{remark} 
\label{rem:weak_error_rate}
Several recent works \cite{bayer2020weak, bayer2022weak, gassiat2022weak, friz2022weak} study the weak error rate of rough Bergomi type models. Without going into (bibliographical)
details, the weak rate has now been identified as $1$ for $H$ above $\frac 1 6$ and  $3H+\frac{1}{2}$ for $H$ below $\frac 1 6$. Importantly, as $H \downarrow 0$, a weak rate of $\frac 12$ persists. 
The fairly large number of time steps we considered in our experiments ($N = 500$) is arguably enough to obtain good benchmark values when $H$ is close to $\frac 1 2$, but we should bear in mind that the bias in the Monte Carlo estimation is expected to become more and more important as $H$ approaches zero. In this case, larger number of time steps might be required to get a trustworthy level of accuracy; of course, the complexity of the exact Cholesky method we exploited in our simulation of the Riemann--Liouville process makes the simulations very demanding for very large values of $N$.
\end{remark} 


\subsection{Local and implied volatility estimators}

In this section, we present in detail the estimators we have implemented for the target objects:  the at-the-money implied volatility skew $\partial_{k}\sigBS(t,k)|_{k=0}$, the local volatility function (or Markovian projection) $\sigDup(\cdot, \cdot)$ in \eqref{e:MarkovProj}, and the local volatility skew $\partial_{k}\sigDup(t,k)|_{k=0}$.

\paragraph{The estimator of the implied volatility skew.}
A representation of the first derivative $\partial_{k}\sigBS(t,k)$ can be obtained by differentiating the equation defining the implied volatility $\sigBS$ with respect to the log-moneyness $k$.
More precisely, denoting $C_\mathrm{BS}(k, v)$ the Black--Scholes price of a call option with log-moneyness $k$ and total volatility parameter $v = \sqrt t \, \sigma$, we have  
\be \label{e:implied_vol_def}
\mathbb E \bigl[(S_0 e^{X_t} - S_0 e^k)^+ \bigr] = C_\mathrm{BS}\bigl(k, \sqrt t \, \sigBS(t,k) \bigr) ,
\ee 
for all $k$ and $t$.
Taking the derivative at both sides of \eqref{e:implied_vol_def} with respect to $k$ and using the expressions of the first-order Black--Scholes
greeks $\partial_k C_\mathrm{BS}(k, v)$ and $\partial_v C_\mathrm{BS}(k, v)$, we have
\[
\begin{aligned}
	\partial_{k}\sigBS(t,k)
	&=  \frac{ -\partial_k C_\mathrm{BS}(k, v) - S_{0} e^k \, \mathbb{P} \bigl(X_{t}\geq k \bigr)}
	{\sqrt t \, \partial_{v} C_\mathrm{BS}(k,v)}\bigg|_{v= \sqrt t \, \sigBS(t,k)} \,
	\\
	&= \frac{N\bigl(d_2(k, v)\bigr)  - \mathbb{P} \bigl(X_{t}\geq k \bigr)}
	{\sqrt t \, \phi \bigl(d_2(k, v)\bigr)} \bigg|_{v= \sqrt t \, \sigBS(t,k)} \,,
	\\
\end{aligned}
\]
where $d_2(k, v) = -\frac k v - \frac v 2$, and $\phi$ (resp.\ $N$) denotes the standard Gaussian density (resp.\ cumulative distribution).
The representation above for the implied volatility skew allows us to avoid finite difference methods and only requires us to estimate $\sigBS(t,k)$ and $\mathbb{P}(X_t \geq k)$, which we can do with the same Monte Carlo sample, in order to estimate $\partial_{k}\sigBS(t,k)$ (and therefore, in particular, the at-the-money skew $\partial_{k}\sigBS(t,0)$).

\paragraph{The estimator of the local volatility function.}
Given the Monte Carlo samples $(X^{N,m}_{T}, V^m_T)_{1 \le m \le M}$ of the couple $(X_T^N, V_T)$, the conditional expectation \eqref{e:MarkovProj} defining the local volatility function can be estimated appealing to several different regression methods, see, e.g.,\ \cite{tsybakov2008introduction, henry2019non}.
We have implemented and benchmarked two different estimators: on the one side, a kernel regressor, already applied to evaluate the Markovian projection 
within the celebrated particular calibration algorithm \cite{GHL2012}, and on the other side, 
an alternative estimator based on the explicit knowledge of the conditional law of $(X_t, V_t) |(W_s)_{s \le t}$.

Our kernel regressor  is the Nadaraya--Watson estimator with bandwidth $\delta$, 
\be \label{e:kernel_reg}
\sigDup^2(t, k) 
= \mathbb E \left[V_t |X_t = k\right]
\approx
\frac{ \sum_{m=1}^M V_t^m K_{\delta} \bigl(X_t^{N,m} - k \bigr) } { \sum_{m=1}^M K_{\delta} \bigl(X_t^{N,m} - k \bigr) }.
\ee
We used a Gaussian kernel $K_{\delta}(x)=\exp(-\delta x^{2})$ in our tests.

On the other hand, it is a standard fact that conditionally on $\cF_{t}=\sigma(W_{u}:u\leq t)$, the instantaneous variance $V_t$ is known, while 
the log-price $X_{t}$ is normally distributed with mean $-\frac{1}{2}\int_{0}^{t} V_{s}\dd s + \rho\int_{0}^{t} \sqrt{V_s} \dd W_s$
and variance $(1-\rho^{2})\int_{0}^{t}V_{s}\dd s$. 
This property yields a representation of the Markovian projection $\sigDup(\cdot, \cdot)$ as the ratio of two expectations,
\begin{equation}
	\label{e:LVrepresentation}
	\sigDup^2(t, k)
	= \mathbb E \left[V_t |X_t = k\right]
	= \frac{\mathbb E \left[V_t \, \Pi_t(k) \right] }{ \mathbb E \left[\Pi_t(k)\right]} 
\end{equation}
where
\[
\Pi_t(k)
=  \frac 1 {\sqrt{\int_0^t V_s \dd s}}
\exp \biggl(
- \frac 1 {2 (1 - \rho^2) \int_0^t V_s \dd s }
\Bigl(k + \frac 12 \int_0^t V_s \dd s - \rho \int_0^t \sqrt {V_s} \dd W_s \Bigr)^2
\biggr).
\]
A derivation of \eqref{e:LVrepresentation} can be found in \cite[Proposition 3.1]{Lee2001}; incidentally, this representation of $\sigDup$ has been exploited in 
\cite{henry2009calibration} in the context of a calibration strategy of local stochastic volatility models -- prior to the particular algorithm \cite{GHL2012}.

\paragraph{The estimator of the local volatility skew.}
Differentiating the right-hand side of \eqref{e:LVrepresentation} with respect to $k$, we obtain a representation of $\partial_k \sigDup(t, k)$:
\be \label{e:LV_skew_representation} 
\partial_{k}\sigDup(t,k)
=
\frac{\frac{\partial}{\partial k} \left(\frac{\bE\left[V_{t}\Pi_{t}\right]}{\bE\left[\Pi_{t}\right]}\right)}{2 \, \sigDup(t,k)} 
= \frac{\bE\left[V_{t}\Pi_{t}\right]\bE\left[\frac{U}{\int_{0}^{t} V_{s}\dd s}\Pi_{t}\right]
	- \bE\left[\frac{U}{\int_{0}^{t}V_{s}\dd s}\Pi_{t}V_{t}\right]\bE\left[\Pi_{t}\right]}{2(1-\rho^{2})\bE\left[V_{t}\Pi_{t}\right]^{1/2}\bE\left[\Pi_{t}\right]^{3/2}} \,,
\ee
where $\Pi_{t}$ is a shorthand for $\Pi_{t}(k)$, 
$U = U(k) = k + \frac{1}{2}\int_{0}^{t}V_{s}\dd s - \rho\int_{0}^{t}\sqrt{V_{s}}\dd W_{s}$,
and  $\frac{\partial\Pi_{t}}{\partial k} = -\frac{U}{(1-\rho^{2})\int_{0}^{t}V_{s}\dd s} \,\Pi_{t}$.
All the expectations  appearing in \eqref{e:LVrepresentation} and \eqref{e:LV_skew_representation} can be estimated based on  the exact simulation of the discretized variance path $(V_{t_k})_{1 \le k \le N}$; we approximate the integrals $\int_0^t V_s \dd s$ and $\int_0^t \sqrt{V_s} \dd W_s$ using left-point Euler schemes.
Note that the resulting non-parametric estimators based on the representations \eqref{e:LVrepresentation} and \eqref{e:LV_skew_representation} do not contain any kernel bandwidth or other 
hyper-parameters to be tuned. This is a clear advantage with respect to \eqref{e:kernel_reg}.
We have nevertheless tested both estimators \eqref{e:kernel_reg} and \eqref{e:LVrepresentation} for the Markovian projection function, and found perfect agreement between the two in our tests -- in other words, the local volatilities and local volatility skews computed with the two different methods would be indistinguishable in Figures \ref{fig:ATMskews} and \ref{fig:local_vol_smiles}.
\medskip

In Figure \ref{fig:ATMskews}, we plot the term structure of the ATM implied and local volatility skews, for three different values of $H$ and maturities up to $T=0.5$ years. 
As pointed out in the Introduction and in section \ref{sec:modeling_framework}, the power-law behavior of the ATM implied volatility skew generated by the rough Bergomi model is already well-known; on the other hand, the power-law behavior observed for the local volatility skew in Figure \ref{fig:ATMskews} is (to the best of our knowledge) new, and consistent with Corollary \ref{corollary:Hplus32}.
Figure \ref{fig:skews_ratio} shows the ratio of the implied volatility ATM skew over the local volatility ATM skew, that is the ratio of the curves observed in Figure \ref{fig:ATMskews}, for the different values of $H$: the numerical results are in very good agreement with the ``$\frac{1}{H+3/2}$ rule'' announced in Corollary \ref{corollary:Hplus32}. 
Additionally, we note that the ratio of the two skews seems to be rather stable -- its value is almost constant for maturities up to $T=0.5$ years, with our parameter setup.

\subsection{Short-dated local volatility} \label{s:short_dated_LV}

Theorem \ref{th:asymptotic:markovian:projection} gives the asymptotic behavior of $\sigDup \bigl(T, y \, T^{1/2-H} \bigr)$ as $T$ becomes small. 
Since $y$ is allowed to vary around the at-the-money point $y=0$,  we can check whether the limit \eqref{eq:asy:loc} holds for the function $y \mapsto \hat{\sigma}_{\mathrm{loc}}(T,y) := \sigDup \bigl(T, y \, T^{1/2-H} \bigr)$, that is the whole local volatility smile rescaled with maturity.
The computation of the limiting function $\sigma\bigl( \hat h^y_1 \bigr)$ requires us to evaluate the Cameron-Martin path $h^y \in H^1$ that minimizes the rate function in \eqref{e:rateF}, for given $y$. 
We follow the procedure already exploited in \cite{forde2017asymptotics} and \cite[section 5.1]{friz2021short} : it can be shown, see \cite{forde2017asymptotics}, that the rate function satisfies the alternative representation 
$\Lambda(y) = \inf \Bigl\{\frac{(y - \rho \, G(h))^2}{2 \, \bar \rho^2 F(h)} + \frac 12 \langle \dot{h}, \dot{h} \rangle: \dot{h} \in L^2(0,1) \Bigr\}$, with $F(h) = \langle \sigma^2(\hat h), 1 \rangle = \int_0^1 \sigma^2(\hat h_t) \dd t$ and $G(h) = \langle \sigma(\hat h),  \dot h \rangle = \int_0^1 \sigma(\hat h_t) \dot h_t \dd t$.
This alternative representation yields the rate function under the form of  an unconstrained optimization problem (as opposed to the constrained optimization in \eqref{e:rateF}), which can then be approximately solved by the projection of the one-dimensional path $h$ over an orthonormal basis $\{\dot{e}_n\}_{n \ge 1}$ of $L^2$, $\dot{h}_t = \sum_{n \ge 1} a_n \dot{e}_n(t)$. 
In practice, we truncate the sum at a certain order $N$ and minimize over the coefficients $(a_n)_{1 \le n \le N}$; we obtain an approximation of the minimizer $h^y$ and therefore of $\hat h^y_t = (K^H * \dot h^y)_t = \int_0^t K(t,s) \dot h_s \dd s$.
We chose the Fourier basis $\bigl\{\dot{e}_{1}(t) =1, \ \dot{e}_{2n}(t) = \sqrt 2 \cos(2 \pi n \, t), \ \dot{e}_{2n+1}(t) = \sqrt 2 \sin(2 \pi n \, t), n \in \mathbb N \setminus \{0\}  \bigr\}$ in our experiments, and observed that truncation of the sum at $N = 8$ provides a good accuracy.
The results for the rough Bergomi model are displayed in Figure \ref{fig:local_vol_smiles}, where the function $\hat{\sigma}_{\mathrm{loc}}(T, y)$ is indeed seen to approach its limit $\sigma\bigl( \hat h^y_1 \bigr)$ when maturity decreases from $T=0.5$ to $T=0.05$.
The residual error term $\hat{\sigma}_{\mathrm{loc}}(T, y) - \sigma\bigl( \hat h^y_1 \bigr)$ is seen to depend on $H$, with lower values of $H$ being associated with higher errors.
It is however unclear whether the error for $H = 0.1$ is due to the slow convergence of $\hat{\sigma}_{\mathrm{loc}}$ or the weak error rate due to the Monte Carlo simulation (see Remark \ref{rem:weak_error_rate}).

\paragraph{Extrapolation of local volatility surfaces.}
Eventually, Theorem \ref{th:asymptotic:markovian:projection} provides us with an extrapolation recipe of local volatilities for very short maturities: fixing a (small) maturity $T$ and a log-moneyness level $k$, formally plugging $y =  \frac k {T^{1/2-H}}$ in \eqref{eq:asy:loc}  we obtain
\[
\sigDup \bigl(T, k \bigr) \approx  
\sigma \Bigl( \hat h_1^{y} \Bigr)\Bigr|_{y =  \frac k {T^{1/2-H}}} \,.
\]
The limiting function $\sigma \bigl( \hat h_1^{y} \bigr)|_{y =  \frac k {T^{1/2-H}} }$ can therefore be used to extrapolate a local volatility surface in a way that is consistent with the behavior implied by a rough volatility model.

As a specific application, consider the calibration of a local-stochastic volatility model (LSV) to an option price surface, for example using the particle method of Guyon and Henry-Labord{\`e}re \cite{GHL2012}. The LSV model can be obtained by the decoration of a naked rough volatility model, which amounts to enhancing the rough volatility model \eqref{eq:rvol:model} for $S_t = S_0 e^{X_t}$ with a leverage function $l(t,S)$,
\[
\dd S_t = S_t \, l(t,S_t) \sqrt{V_t} \left( \rho \, \dd W_t + \sqrt{1 - \rho^2} \, \dd \overline W_t \right) \,.
\]
Given the spot variance process $V$, the LSV model calibrated to a given Dupire local volatility surface $\sigma_\mathrm{Dup}$, corresponds to (see \cite{GHL2012})
\[
l(t,S_t) = \frac{ \sigma_\mathrm{Dup}(t,S_t) }  {\sqrt{ \mathbb E[V_t | S_t]}}.
\] 
In general, one wishes the leverage function $l(t,S)$ to be a small correction to the original stochastic volatility model (in other words: as close as possible to $l \equiv 1$).
In practice, the local volatility $\sigma_\mathrm{Dup}$ coming from market data has to be extrapolated  for values of $t$ smaller than the shorter observed maturity, and the choice of the extrapolation method is up to the user.
If, for small $t$, the chosen extrapolation $\sigma_\mathrm{Dup}(t,K)$ is qualitatively  too different from the behavior of the conditional expectation $\mathbb E[V_t | S_t = K]$ in the rough volatility setting (for example, more specifically: the ATM skew of $\sigma_\mathrm{Dup}$ is far from the power law \eqref{eq:ldp:loc:vol:skew}), then the leverage function will have to compensate, hence deviating from the unit function.
Under the pure rough volatility model ($l \equiv 1$), Theorem \ref{th:asymptotic:markovian:projection} and Corollary \ref{corollary:Hplus32} describe the behavior of  the Markovian projection $\mathbb E[V_t | S_t]$ for small $t$: eventually, these statements give hints on how $\sigma_\mathrm{Dup}(t,\cdot)$ should be extrapolated for $l(t,\cdot) $ not to deviate too much from the unit function.
Such an extrapolation scheme is exploited in the recent work of Dall'Acqua, Longoni and Pallavicini \cite{DallAcqua2022}, precisely in order to calibrate a LSV model with lifted Heston \cite{AbiJaber2019} backbone to the implied volatility surface of the EuroStoxx50 index.\footnote{We thank Andrea Pallavicini and Riccardo Longoni for interesting and stimulating discussions on this topic.}

\paragraph{Failure of the harmonic mean asymptotic formula under rough volatility.}
In Remark \ref{rem:harm_mean}, we pointed out that, as a consequence of the general $\frac{1}{H+3/2}$ skew rule (as opposed to the $1/2$ rule) in Corollary \ref{corollary:Hplus32}, the harmonic mean asymptotic formula  $\sigBS(T, k) \sim H(T,k)$ as $T \to 0$, see \eqref{eq:def:harmonic:mean}, is expected not to hold for $H \neq 1/2$ (without any contradiction with the statements in \cite{BBF02}, which require regularity conditions on local volatility surface that are not satisfied in the rough volatility setting, see our discussion in Remark \ref{rem:harm_mean}).
In other words, we do not expect the harmonic mean of the local volatility $H(T,k) = \frac 1{\frac 1k \int_0^k \frac {\dd y}{\sigDup(T,y)}}$ to be a good approximation of the implied volatility when maturities become small when the involved volatility surfaces are generated by a rough vol model.
Having constructed estimators \eqref{e:kernel_reg}
 and \eqref{e:LVrepresentation} of the local volatility function under the rough Bergomi model, we are also able to approximate (with an additional deterministic quadrature) the harmonic mean $H(T,k)$, and compare the output with the implied volatility smile.
The results are shown in Figure \ref{fig4}, for three different  values of $H$.
As expected, when $H=0.5$ we observe (upper left panel) that the implied volatility $\sigBS(T,k)$ approaches the harmonic mean $H(T,k)$ when maturity decreases from $T=0.45$ to $T=0.05$, and the at-the-money slopes are also seen to agree. The convergence is even more apparent in the upper right figure, where the ratio $\frac{\sigBS(T,k)}{H(T,k)}$ is seen to monotonically converge to one.
This behavior should be compared with the one in the two bottom figures, where the rough case $H=0.1$ is considered (the case $H=0.3$ being intermediate between the other two): now, when maturity decreases, the implied volatility smile does not seem to approach the harmonic mean $H(T,k)$ anymore (apart from the specific at-the-money point $k=0$ where both functions tend to the initial spot volatility $\sigma_0 = \sqrt{V_0}$), and in particular, the slopes of the two curves are seen to considerably deviate from each other. This phenomenon is even more clear in the bottom right figure, where the ratio $\frac{\sigBS(T,k)}{H(T,k)}$ has a completely different behavior with respect to the diffusive case $H=0.5$.

\bibliographystyle{abbrv}
\bibliography{roughvol}

\begin{thebibliography}{10}

\bibitem{AbiJaber2019}
E.~{Abi Jaber}.
\newblock Lifting the heston model.
\newblock {\em Quantitative Finance}, 19(12):1995--2013, 2019.

\bibitem{AlosEtAl2022}
E.~{Al{\`o}s}, D.~{Garc{\'\i}a-Lorite}, and M.~{Pravosud}.
\newblock {On the skew and curvature of implied and local volatilities}.
\newblock arXiv e-prints, \url{https://arxiv.org/pdf/2205.11185.pdf}, 2022.

\bibitem{alos2007short}
E.~Al{\`o}s, J.~A. Le{\'o}n, and J.~Vives.
\newblock {On the short-time behavior of the implied volatility for
  jump-diffusion models with stochastic volatility}.
\newblock {\em Finance and Stochastics}, 11(4):571--589, 2007.

\bibitem{azencott1985petites}
R.~Azencott.
\newblock {Petites perturbations al{\'{e}}atoires des syst{\`e}mes dynamiques:
  d{\'{e}}veloppements asymptotiques}.
\newblock {\em Bulletin des sciences math{\'{e}}matiques}, 109(3):253--308,
  1985.

\bibitem{bally:inria-00071868}
V.~Bally.
\newblock {An elementary introduction to Malliavin calculus}.
\newblock Research Report RR-4718, {INRIA. Available at
  https://hal.inria.fr/inria-00071868.}, 2003.

\bibitem{bayer2016pricing}
C.~Bayer, P.~Friz, and J.~Gatheral.
\newblock {Pricing under rough volatility}.
\newblock {\em Quantitative Finance}, 16(6):887--904, 2016.

\bibitem{bayer2017regularity}
C.~Bayer, P.~K. Friz, P.~Gassiat, J.~Martin, and B.~Stemper.
\newblock {A regularity structure for rough volatility}.
\newblock {\em Mathematical Finance}, 30(3):782--832, 2020.

\bibitem{bayer2017short}
C.~Bayer, P.~K. Friz, A.~Gulisashvili, B.~Horvath, and B.~Stemper.
\newblock {Short-time near-the-money skew in rough fractional volatility
  models}.
\newblock {\em Quantitative Finance}, 19(5):779--798, 2019.

\bibitem{bayer2022weak}
C.~Bayer, M.~Fukasawa, and S.~Nakahara.
\newblock Short communication: On the weak convergence rate in the
  discretization of rough volatility models.
\newblock {\em SIAM Journal on Financial Mathematics}, 13(3), 2022.

\bibitem{bayer2020weak}
C.~Bayer, E.~J. Hall, and R.~Tempone.
\newblock Weak error rates for option pricing under the rough bergomi model.
\newblock {\em arXiv preprint arXiv:2009.01219, and to appear in IJTAF}, 2020.

\bibitem{bayer2020}
C.~Bayer, C.~B. Hammouda, and R.~Tempone.
\newblock {Hierarchical adaptive sparse grids and quasi-Monte Carlo for option
  pricing under the rough Bergomi model}.
\newblock {\em Quantitative Finance}, 0(0):1--17, 2020.

\bibitem{BHP21}
C.~Bayer, F.~A. Harang, and P.~Pigato.
\newblock {Log-Modulated Rough Stochastic Volatility Models}.
\newblock {\em SIAM Journal on Financial Mathematics}, 12(3):1257--1284, 2021.

\bibitem{bayer2019deep}
C.~Bayer, B.~Horvath, A.~Muguruza, B.~Stemper, and M.~Tomas.
\newblock {On deep calibration of (rough) stochastic volatility models}.
\newblock {\em arXiv preprint arXiv:1908.08806}, 2019.

\bibitem{BLP15}
M.~Bennedsen, A.~Lunde, and M.~S. Pakkanen.
\newblock Hybrid scheme for {B}rownian semistationary processes.
\newblock {\em Finance and Stochastics}, 21(4):931--965, 2017.

\bibitem{BLP16}
M.~Bennedsen, A.~Lunde, and M.~S. Pakkanen.
\newblock {Decoupling the Short- and Long-Term Behavior of Stochastic
  Volatility}.
\newblock {\em Journal of Financial Econometrics}, 2021.

\bibitem{BBF02}
H.~Berestycki, J.~Busca, and I.~Florent.
\newblock {Asymptotics and calibration of local volatility models}.
\newblock {\em Quantitative Finance}, 2:61--69, 2002.

\bibitem{berestycki2004computing}
H.~Berestycki, J.~Busca, and I.~Florent.
\newblock {Computing the implied volatility in stochastic volatility models}.
\newblock {\em Communications on Pure and Applied Mathematics},
  57(10):1352--1373, 2004.

\bibitem{BrunickShreve}
G.~Brunick and S.~Shreve.
\newblock {Mimicking an It\^{o} process by a solution of a stochastic
  differential equation}.
\newblock {\em The Annals of Applied Probability}, 23(4):1584 -- 1628, 2013.

\bibitem{DallAcqua2022}
E.~Dall'Acqua, R.~Longoni, and A.~Pallavicini.
\newblock Rough-heston local-volatility model.
\newblock arXiv e-prints, \url{https://arxiv.org/abs/2206.09220}, 2022.

\bibitem{de2013rational}
S.~{De Marco}, P.~Friz, and S.~Gerhold.
\newblock {Rational shapes of local volatility}.
\newblock {\em Risk}, 26(2):70, 2013.

\bibitem{de2018local}
S.~De~Marco and P.~K. Friz.
\newblock {Local Volatility, Conditioned Diffusions, and Varadhan's Formula}.
\newblock {\em SIAM Journal on Financial Mathematics}, 9(2):835--874, 2018.

\bibitem{Derman96}
E.~Derman, I.~Kani, and J.~Z. Zou.
\newblock The local volatility surface: Unlocking the information in index
  option prices.
\newblock {\em Financial Analysts Journal}, 52(4):25--36, 1996.

\bibitem{dupire1994pricing}
B.~Dupire.
\newblock {Pricing with a smile}.
\newblock {\em Risk}, 7(1):18--20, 1994.

\bibitem{dupire1996unified}
B.~Dupire.
\newblock {A unified theory of volatility}.
\newblock {\em Derivatives pricing: The classic collection, 2004 (P. Carr,
  ed.)}, pages 185--196, 1996.

\bibitem{euch2018short}
O.~El~Euch, M.~Fukasawa, J.~Gatheral, and M.~Rosenbaum.
\newblock Short-term at-the-money asymptotics under stochastic volatility
  models.
\newblock {\em SIAM Journal on Financial Mathematics}, 10(2):491--511, 2019.

\bibitem{el2018microstructural}
O.~El~Euch, M.~Fukasawa, and M.~Rosenbaum.
\newblock The microstructural foundations of leverage effect and rough
  volatility.
\newblock {\em Finance and Stochastics}, 22(2):241--280, 2018.

\bibitem{euch2016characteristic}
O.~El~Euch and M.~Rosenbaum.
\newblock {The characteristic function of rough Heston models}.
\newblock {\em Mathematical Finance}, 29(1):3--38, 2019.

\bibitem{forde2017asymptotics}
M.~Forde and H.~Zhang.
\newblock Asymptotics for rough stochastic volatility models.
\newblock {\em SIAM Journal on Financial Mathematics}, 8(1):114--145, 2017.

\bibitem{fournie2001applications}
E.~Fourni{\'e}, J.-M. Lasry, J.~Lebuchoux, and P.-L. Lions.
\newblock {Applications of Malliavin calculus to Monte-Carlo methods in
  finance. II}.
\newblock {\em Finance and Stochastics}, 5(2):201--236, 2001.

\bibitem{friz2022weak}
P.~Friz, W.~Salkeld, and T.~Wagenhofer.
\newblock Weak error estimates for rough volatility models.
\newblock {\em arXiv}, 2022.

\bibitem{friz2018precise1}
P.~K. Friz, P.~Gassiat, and P.~Pigato.
\newblock {Precise asymptotics: Robust stochastic volatility models}.
\newblock {\em The Annals of Applied Probability}, 31(2):896--940, 2021.

\bibitem{friz2021short}
P.~K. Friz, P.~Gassiat, and P.~Pigato.
\newblock {Short-dated smile under rough volatility: asymptotics and numerics}.
\newblock {\em Quantitative Finance}, pages 1--18, 2021.

\bibitem{friz2015large}
P.~K. Friz, J.~Gatheral, A.~Gulisashvili, A.~Jacquier, and J.~Teichmann.
\newblock {\em {Large deviations and asymptotic methods in finance}}, volume
  110.
\newblock Springer, 2015.

\bibitem{friz2014make}
P.~K. Friz, S.~Gerhold, and M.~Yor.
\newblock {How to make Dupire's local volatility work with jumps}.
\newblock {\em Quantitative Finance}, 14(8):1327--1331, 2014.

\bibitem{friz2020course}
P.~K. Friz and M.~Hairer.
\newblock {\em {A Course on Rough Paths. With an introduction to regularity
  structures.}}
\newblock Springer, 2020.

\bibitem{fps2020}
P.~K. Friz, P.~Pigato, and J.~Seibel.
\newblock {The Step Stochastic Volatility Model}.
\newblock {\em Risk}, June, 2021.

\bibitem{fukasawa2011asymptotic}
M.~Fukasawa.
\newblock {Asymptotic analysis for stochastic volatility: martingale
  expansion}.
\newblock {\em Finance and Stochastics}, 15(4):635--654, 2011.

\bibitem{fukasawa2017short}
M.~Fukasawa.
\newblock Short-time at-the-money skew and rough fractional volatility.
\newblock {\em Quantitative Finance}, 17(2):189--198, 2017.

\bibitem{fukasawa2020}
M.~Fukasawa.
\newblock Volatility has to be rough.
\newblock {\em Quantitative Finance}, 21(1):1--8, 2021.

\bibitem{FTW19}
M.~Fukasawa, T.~Takabatake, and R.~Westphal.
\newblock {Is Volatility Rough?}
\newblock {\em arXiv preprint arXiv:1905.04852}, 2019.

\bibitem{gassiat2018martingale}
P.~Gassiat.
\newblock On the martingale property in the rough {Bergomi} model.
\newblock {\em Electron. Commun. Probab.}, 24:9 pp., 2019.

\bibitem{gassiat2022weak}
P.~Gassiat.
\newblock Weak error rates of numerical schemes for rough volatility.
\newblock {\em arXiv preprint arXiv:2203.09298}, 2022.

\bibitem{gatheral2006volatility}
J.~Gatheral.
\newblock {\em {The volatility surface: a practitioner's guide}}.
\newblock John Wiley \& Sons, 2006.

\bibitem{ghlow}
J.~Gatheral, E.~P. Hsu, P.~Laurence, C.~Ouyang, and T.-H. Wang.
\newblock {Asymptotics of implied volatility in local volatility models}.
\newblock {\em Mathematical Finance}, 22(4):591--620, 2012.

\bibitem{gatheral2018volatility}
J.~Gatheral, T.~Jaisson, and M.~Rosenbaum.
\newblock Volatility is rough.
\newblock {\em Quantitative Finance}, pages 1--17, 2018.

\bibitem{gatheral2019affine}
J.~Gatheral and M.~Keller-Ressel.
\newblock {Affine forward variance models}.
\newblock {\em Finance and Stochastics}, 23(3):501--533, 2019.

\bibitem{goudenege}
L.~Gouden{\`e}ge, A.~Molent, and A.~Zanette.
\newblock {Machine learning for pricing American options in high-dimensional
  Markovian and non-Markovian models}.
\newblock {\em Quantitative Finance}, 20(4):573--591, 2020.

\bibitem{gulisashvili2017large}
A.~Gulisashvili.
\newblock Large deviation principle for volterra type fractional stochastic
  volatility models.
\newblock {\em SIAM Journal on Financial Mathematics}, 9(3):1102--1136, 2018.

\bibitem{GULISASHVILI20203648}
A.~Gulisashvili.
\newblock Gaussian stochastic volatility models: Scaling regimes, large
  deviations, and moment explosions.
\newblock {\em Stochastic Processes and their Applications}, 130(6):3648 --
  3686, 2020.

\bibitem{GHL2012}
J.~Guyon and P.~Henry-Labord\`{e}re.
\newblock {Being particular about calibration}.
\newblock {\em Risk, January}, 2012.

\bibitem{GyongyMimicking}
I.~Gyongy.
\newblock {Mimicking the one-dimensional marginal distributions of processes
  having an It\^{o} differential}.
\newblock {\em Probab. Th. Rel. Fields}, 71(4):501--516, 1986.

\bibitem{henry2009calibration}
P.~Henry-Labord\`{e}re.
\newblock {Calibration of local stochastic volatility models to market smiles:
  A Monte-Carlo approach}.
\newblock {\em Risk Magazine, September}, 2009.

\bibitem{henry2019non}
P.~Henry-Labord\`{e}re.
\newblock {(Non)-Parametric Regressions: Applications to Local Stochastic
  Volatility Models}.
\newblock {\em Available at SSRN 3374875}, 2019.

\bibitem{jost2007}
C.~Jost.
\newblock {A note on ergodic transformations of self-similar Volterra Gaussian
  processes}.
\newblock {\em Electron. Commun. Probab.}, 12:259--266, 2007.

\bibitem{jourdain2004loss}
B.~Jourdain.
\newblock Loss of martingality in asset price models with lognormal stochastic
  volatility.
\newblock {\em preprint Cermics}, 267:2004, 2004.

\bibitem{Lee2001}
R.~W. Lee.
\newblock Implied and local volatilities under stochastic volatility.
\newblock {\em International Journal of Theoretical and Applied Finance},
  4(1):45--89, 2001.

\bibitem{Lee2005}
R.~W. Lee.
\newblock Implied volatility: Statics, dynamics, and probabilistic
  interpretation.
\newblock {\em Recent Advances in Applied Probability}, pages 241--268, 2005.

\bibitem{McCrickerdPakkanen2018}
R.~McCrickerd and M.~S. Pakkanen.
\newblock {Turbocharging Monte Carlo pricing for the rough Bergomi model}.
\newblock {\em Quantitative Finance}, 18(11):1877--1886, 2018.

\bibitem{Nualart:06}
D.~Nualart.
\newblock {\em {The Malliavin calculus and related topics}}, volume 1995.
\newblock Springer, 2006.

\bibitem{pig19}
P.~Pigato.
\newblock {Extreme at-the-money skew in a local volatility model}.
\newblock {\em Finance and Stochastics}, 23:827--859, 2019.

\bibitem{sin1998complications}
C.~A. Sin.
\newblock Complications with stochastic volatility models.
\newblock {\em Advances in Applied Probability}, 30(1):256--268, 1998.

\bibitem{takanobu1993asymptotic}
S.~Takanobu.
\newblock {Asymptotic expansion formulas of the Schilder type for a class of
  conditional Wiener functional integrations}.
\newblock In {\em Asymptotic problems in probability theory: Wiener functionals
  and asymptotics, Proceedings of the Taniguchi International Symposium, Sanda
  and Kyoto, 1990}, pages 194--241. Longman Sci. Tech., 1993.

\bibitem{tsybakov2008introduction}
A.~B. Tsybakov.
\newblock {\em {Introduction to nonparametric estimation}}.
\newblock Springer Science \& Business Media, 2008.

\end{thebibliography}

\begin{figure}[H]
	\centering
	\includegraphics[scale=0.48]{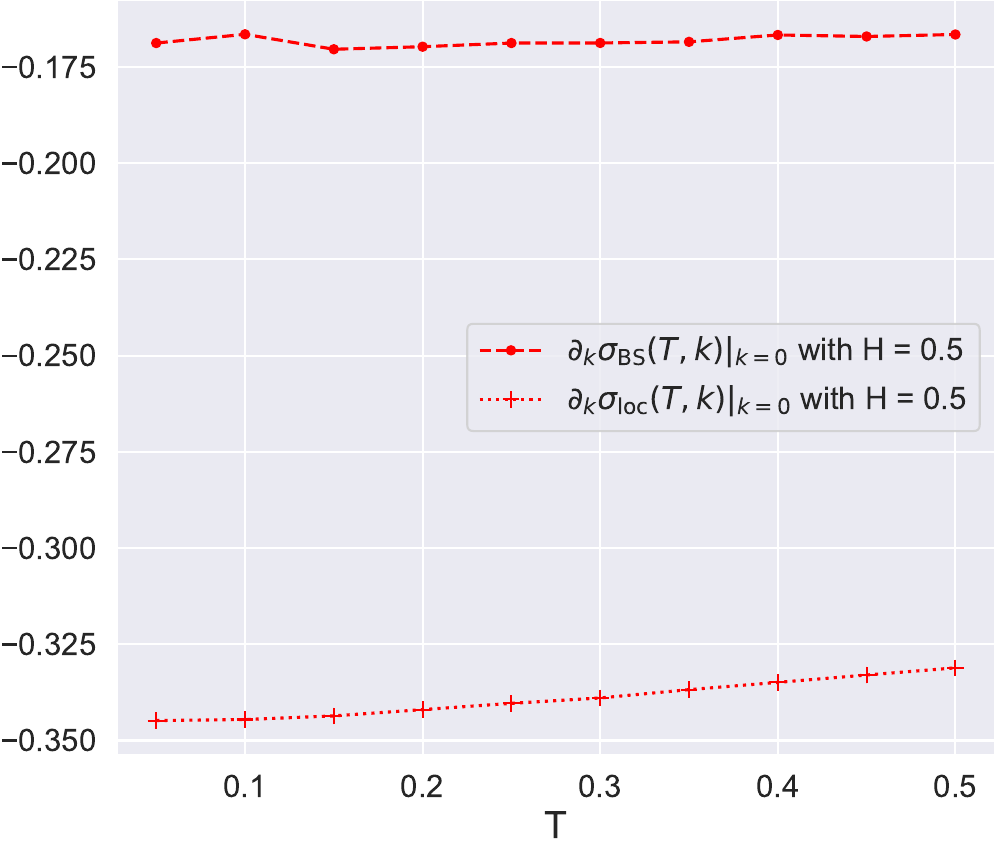}
	\includegraphics[scale=0.48]{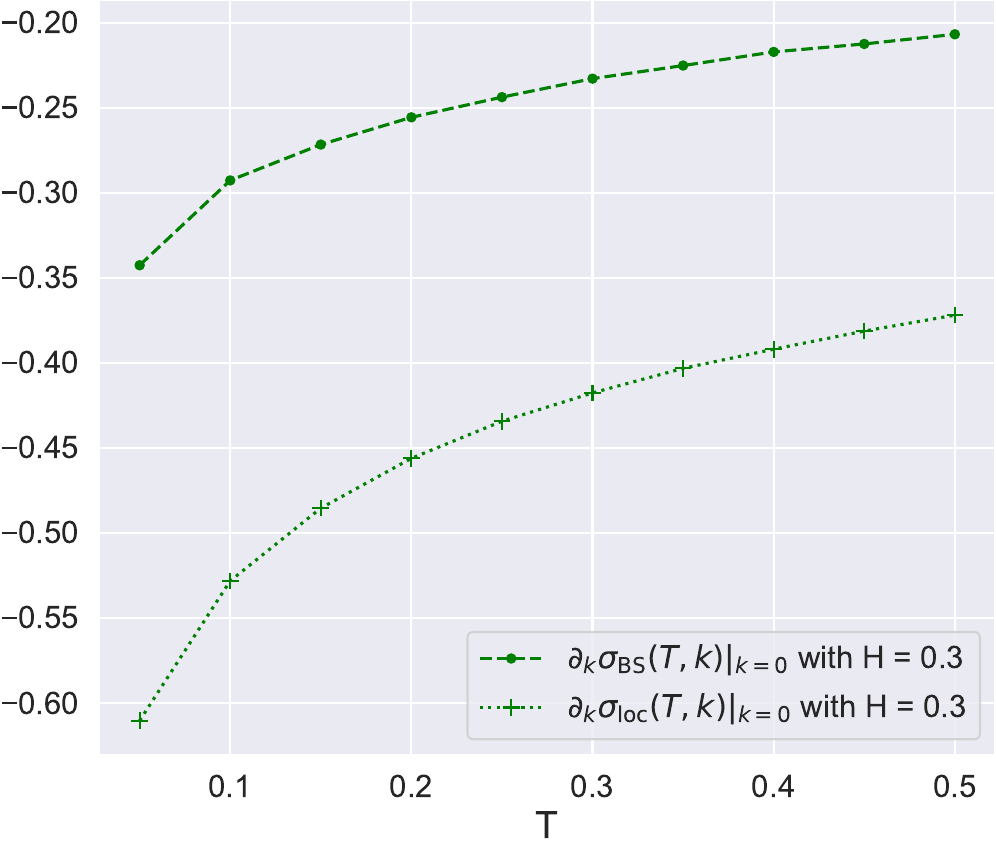}\\
	\includegraphics[scale=0.48]{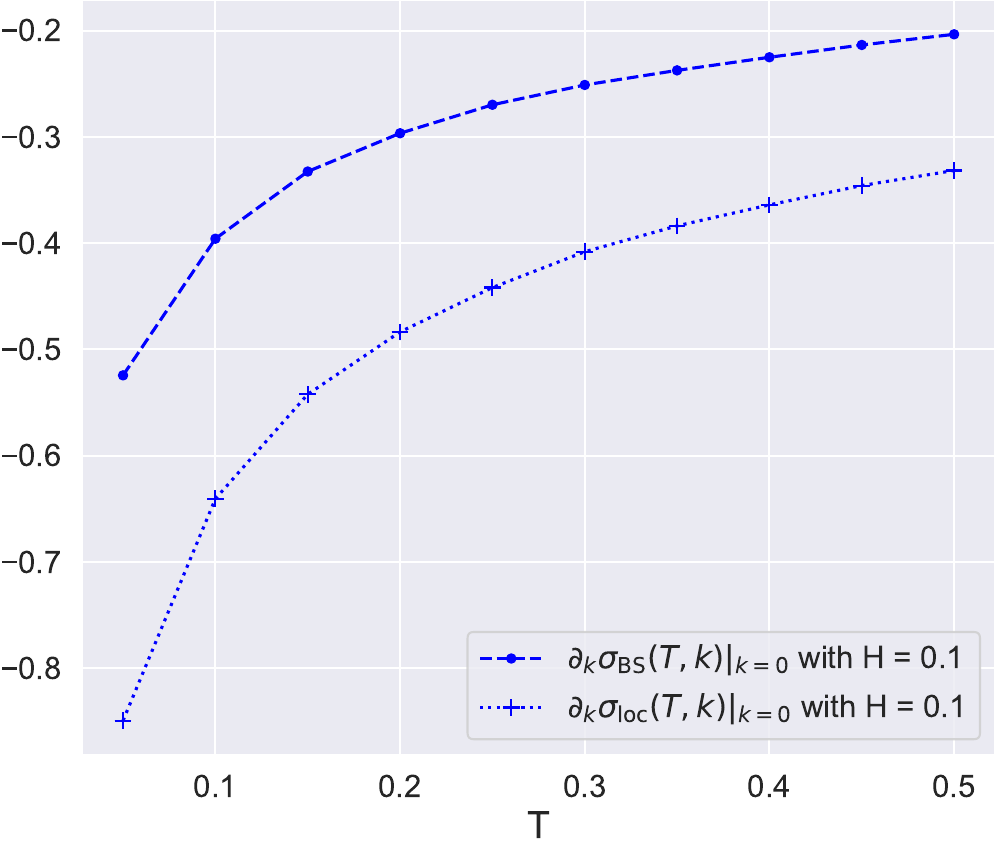}
	\caption{At-the-money implied and local volatility skews in the rough Bergomi model \eqref{e:rBergomi} for $H=0.5$ (red, top left figure),  $H=0.3$ (green, top right figure), and $H=0.1$ (blue, bottom figure).
		The maturity $T$ on the $x$-axis is expressed in years.}
		\label{fig:ATMskews}
\end{figure}

\begin{figure}[H]
	\begin{center}
		\includegraphics[scale=0.6]{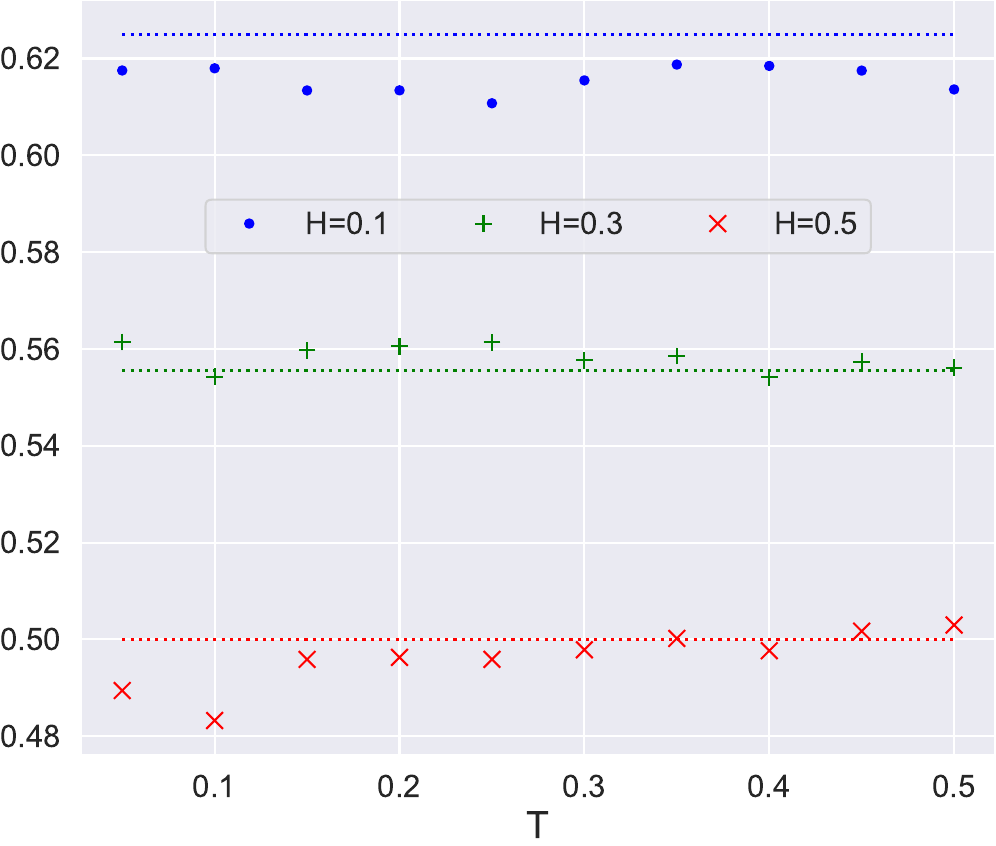}
	\end{center}
	\caption{Numerical evidence for the $\frac{1}{H+3/2}$ ratio rule stated in Corollary \ref{corollary:Hplus32}: we plot the ratio of the at-the-money implied and local volatility skews  $\frac{\partial_k \sigBS (T,k)|_{k=0}}{\partial_k \sigDup (T,k)|_{k=0}}$  for $H \in \{0.1, 0.3, 0.5\}$ against maturity $T$ (in years).
		The dashed lines correspond to the constant values $\frac{1}{H+3/2}$ (blue for $H=0.1$, green for $H=0.3$, red for $H=0.5$).
	}
\label{fig:skews_ratio}
\end{figure}

\begin{figure}[H]
	\centering
	\includegraphics[scale=0.48]{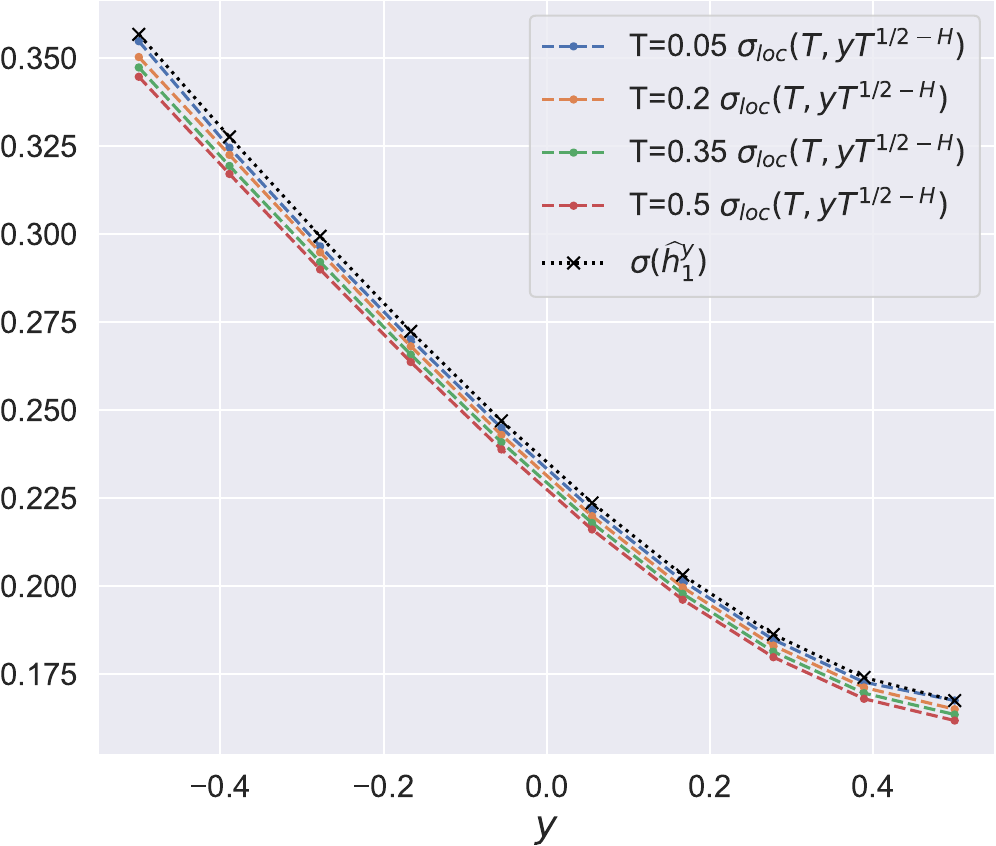}
	\includegraphics[scale=0.48]{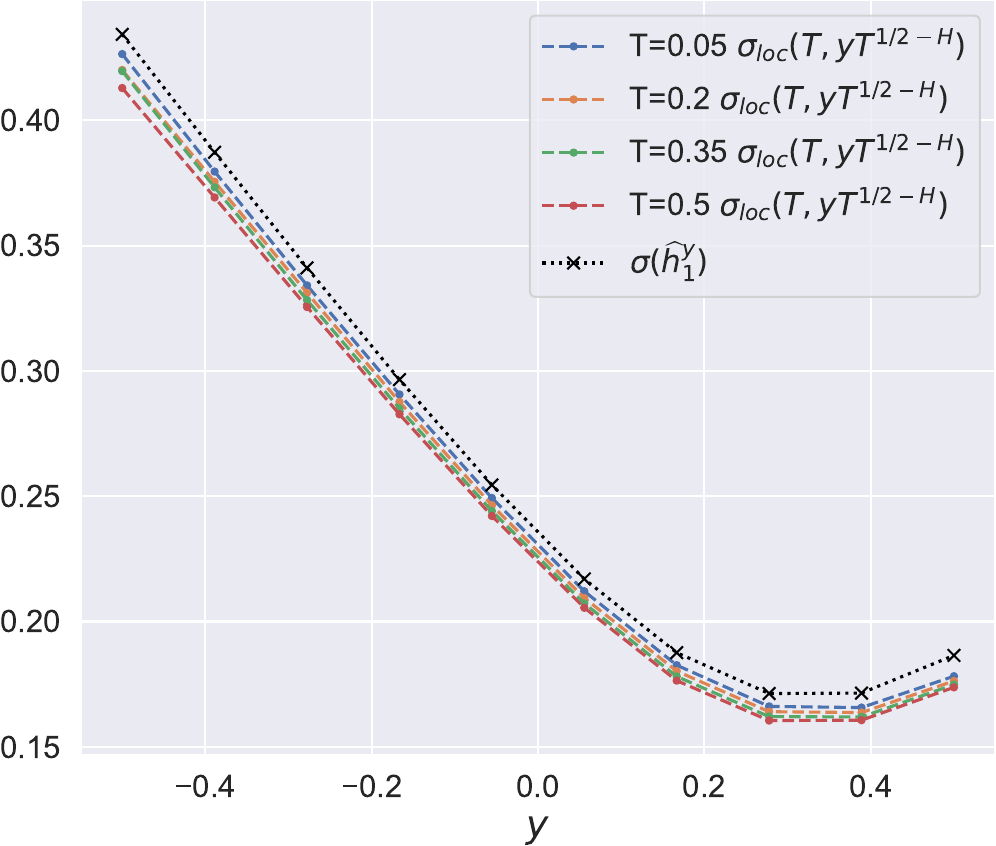}\\
	\includegraphics[scale=0.48]{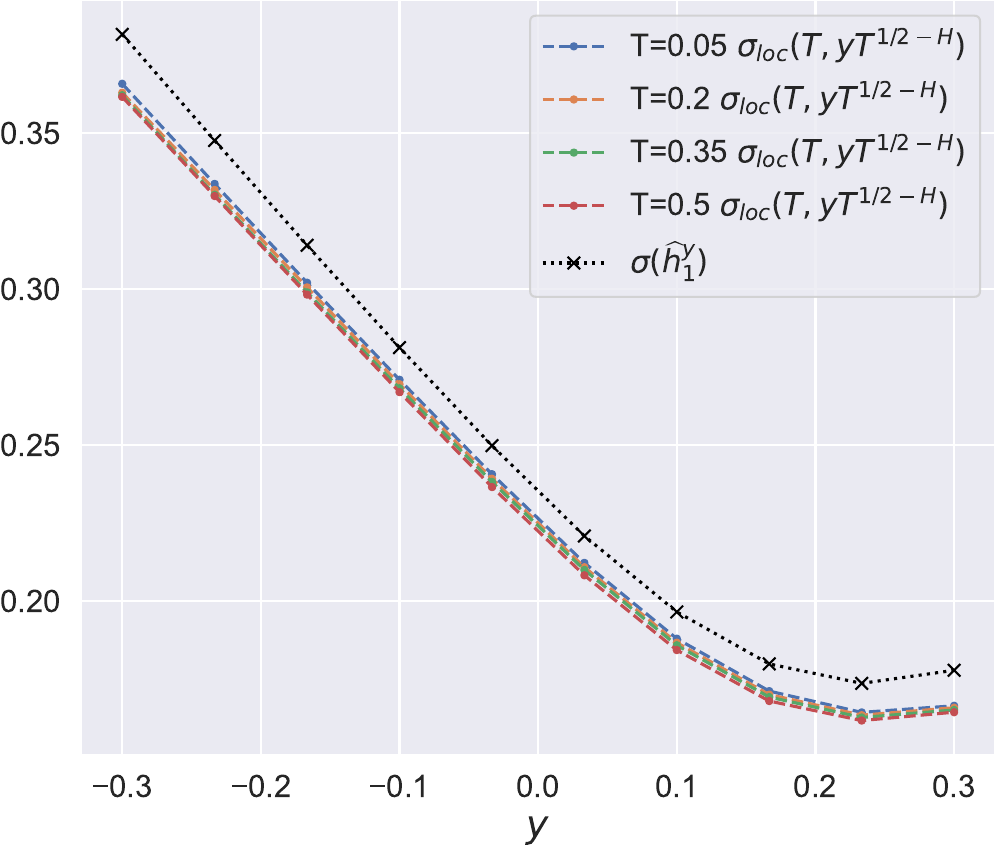}
	\caption{Short-dated local volatility  the rough Bergomi model \eqref{e:rBergomi} for $H=0.5$ (top left figure), $H=0.3$ (top right figure), and $H=0.1$ (bottom figure).
		Recall that, according to Theorem \ref{th:asymptotic:markovian:projection}, $\sigDup(T, y \, T^{1/2 - H}) \to \sigma(\hat h^y_1)$ as $T \to 0$.
		The rate function minimizing path $\hat h^y_t$ is evaluated using the Ritz projection method with $N = 8$
		Fourier basis functions, see section \ref{s:short_dated_LV}.}
		\label{fig:local_vol_smiles}
\end{figure}

\begin{figure}[H]
	\centering
	\includegraphics[scale=0.45]{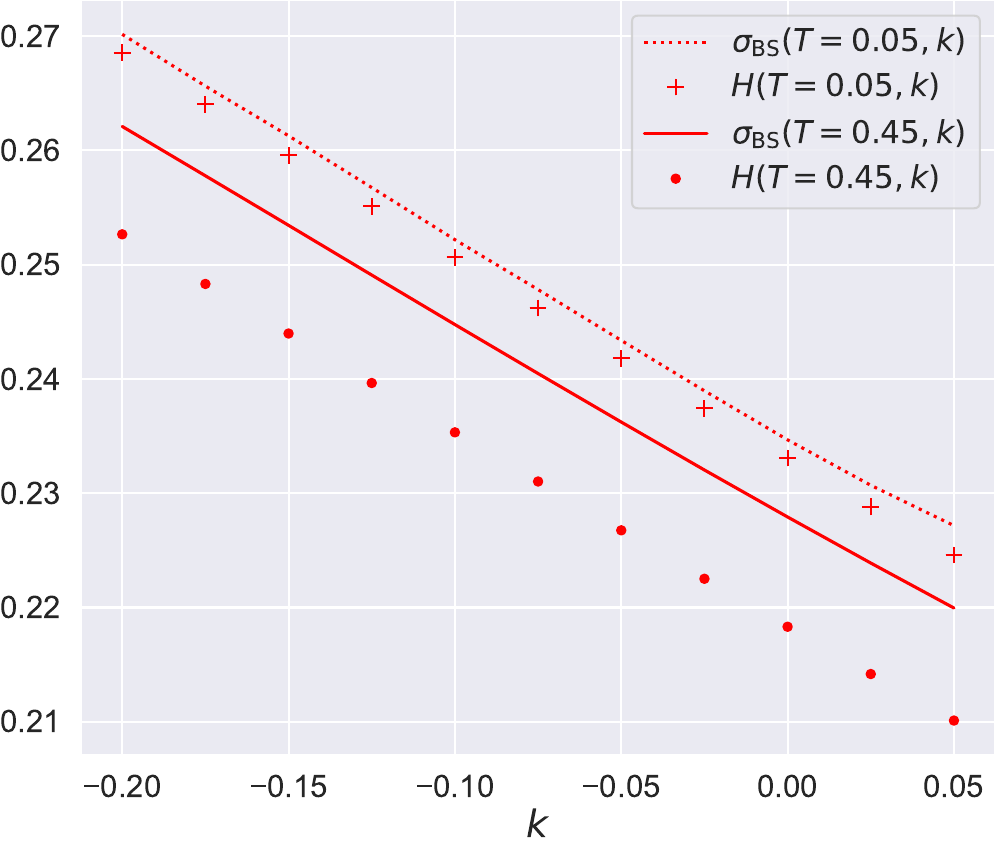}
	\includegraphics[scale=0.45]{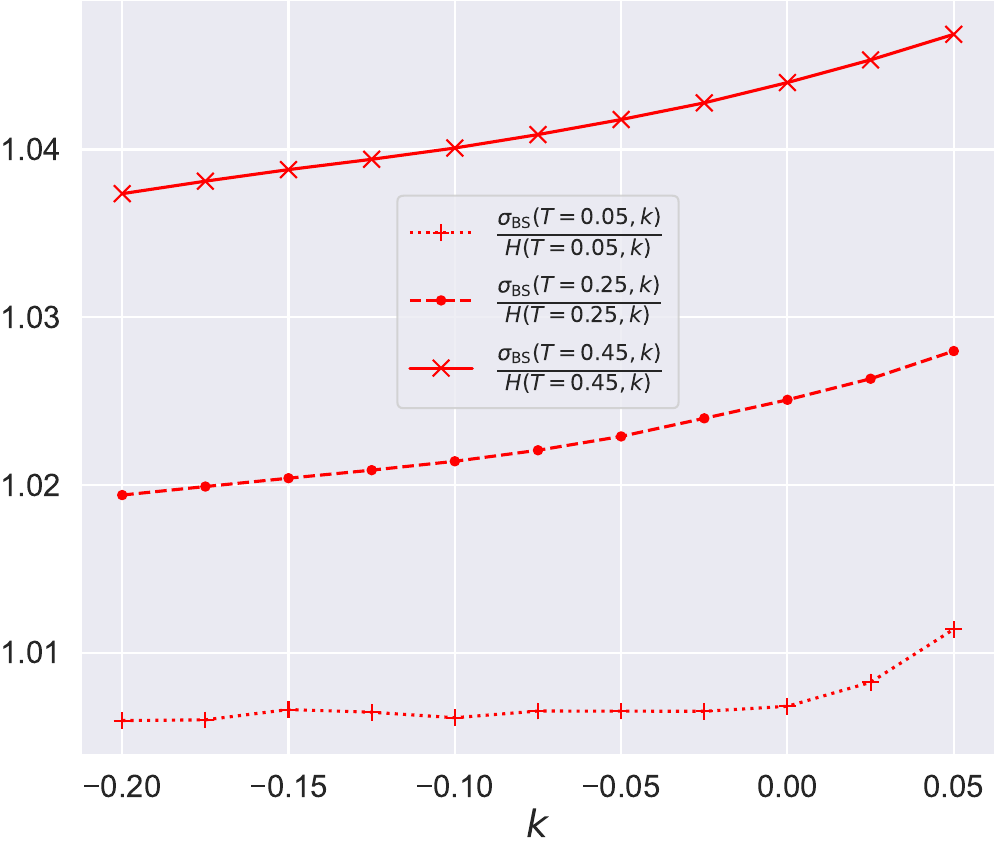}\\
	\includegraphics[scale=0.45]{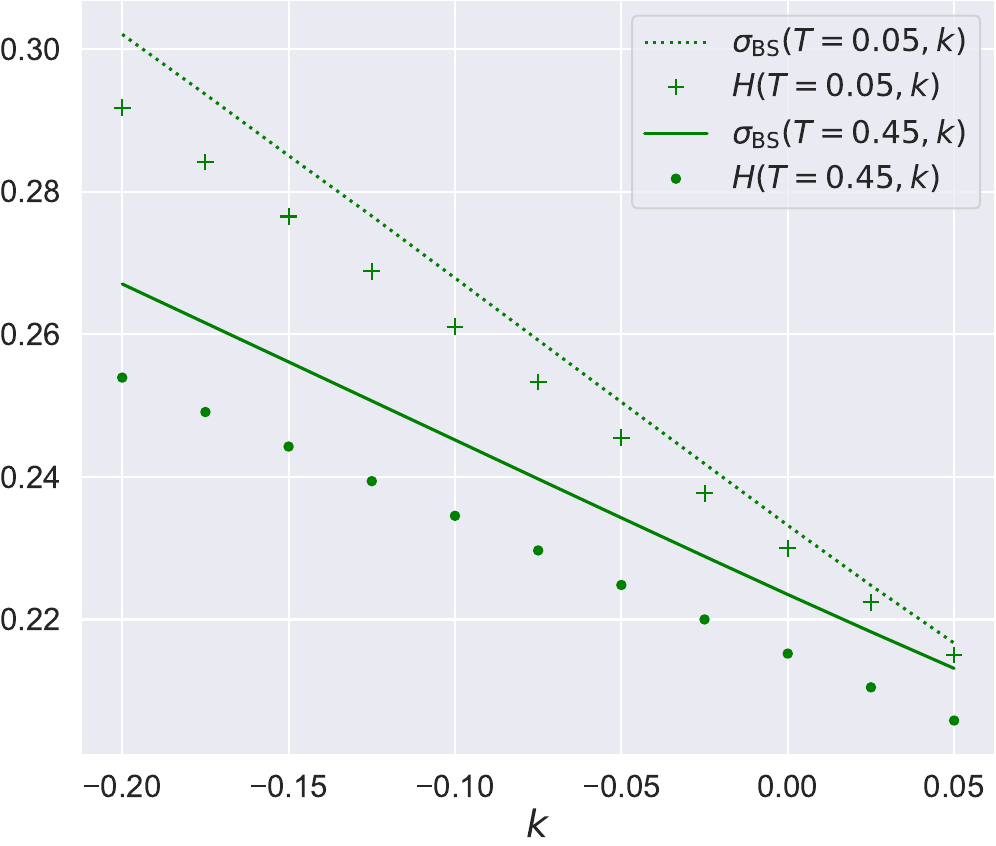}
	\includegraphics[scale=0.45]{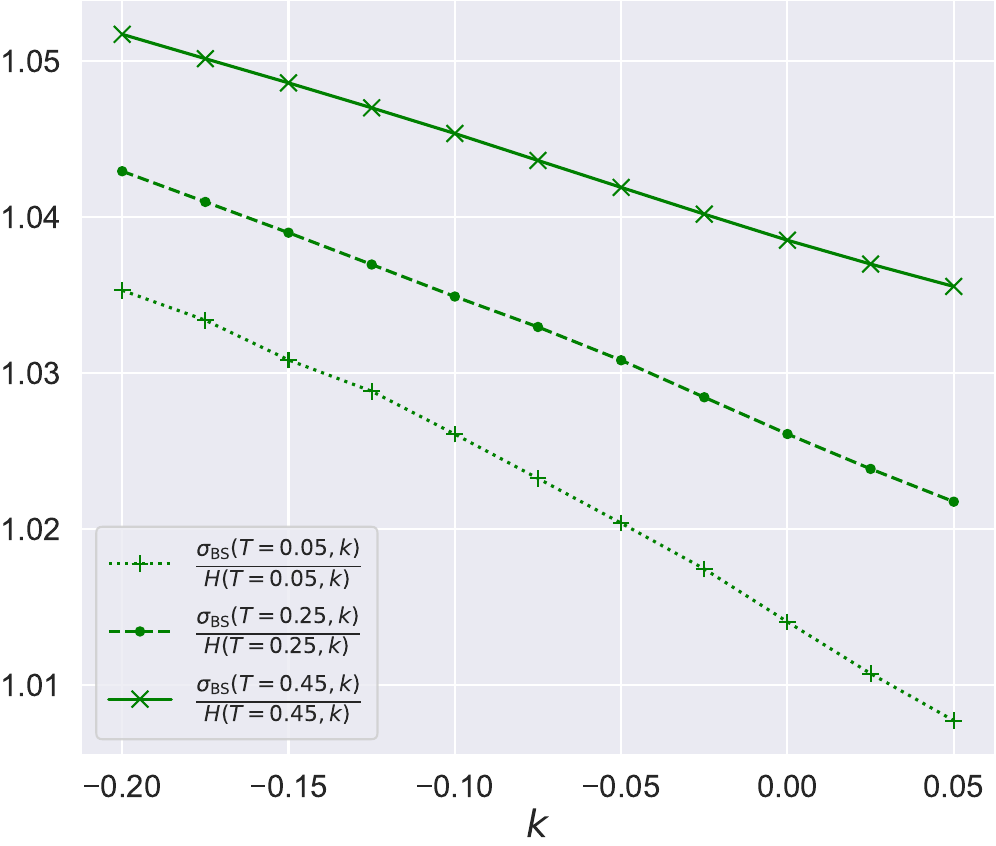}\\
	\includegraphics[scale=0.45]{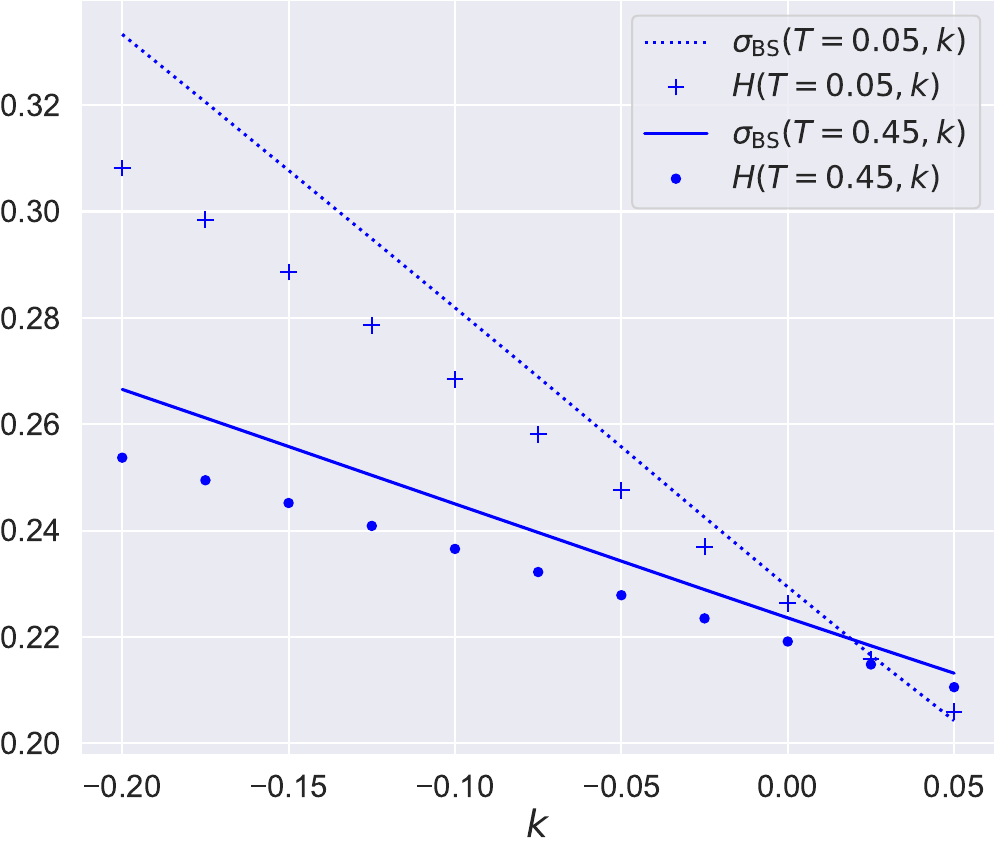}
	\includegraphics[scale=0.45]{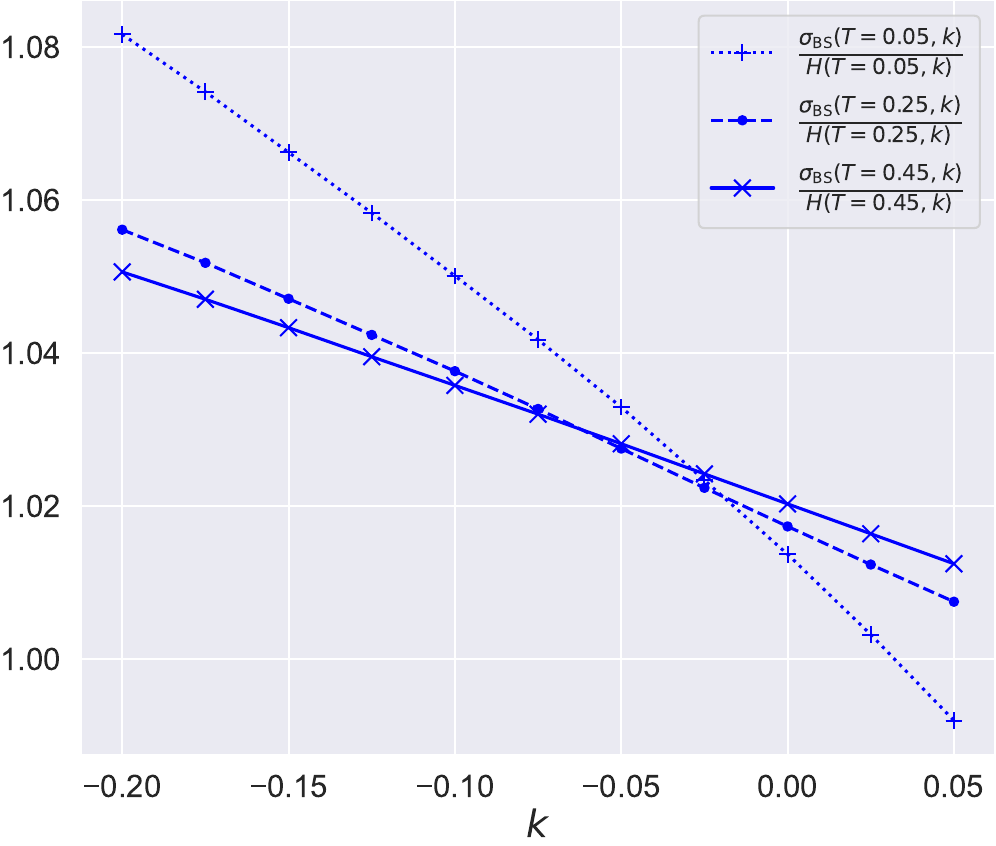}
	\caption{Numerical evidence for the failure of the harmonic mean formula within the rough Bergomi model \eqref{e:rBergomi} (see Remark \ref{rem:harm_mean}): 
		in the left figures, we compare the  implied volatility $\sigBS(T, k)$ and the harmonic mean $H(T, k)$ of the local volatility defined in \eqref{eq:def:harmonic:mean},
		for two different maturities $T$ 
		and for $H=0.5$ (red), $H=0.3$ (green), and $H=0.1$ (blue).
		In the right figures (same color conventions as the left figures), we plot the ratio $\frac{\sigBS(T, k)}{H(T,k)}$ of the two functions, expected to tend to $1$ as $T
		\to 0$ when $H=0.5$.}
		\label{fig4}	
\end{figure}

\section{Proofs}\label{sec:proofs}
\begin{proof}[Proof of Corollary \ref{corollary:Hplus32}]
Equation \eqref{eq:ldp:loc:vol:skew} is a straightforward consequence of Theorem \ref{th:asymptotic:markovian:projection}.

It is a standard result that the implied volatility $\sigBS$ and the local volatility $\sigDup$ generated by a stochastic volatility model with 
$\rho = 0$ are symmetric around $y = 0$, so that the finite-difference at-the-money skews $\Sbs$ and $\Sloc$ are identically zero in this case.
We, therefore, assume $\rho \neq 0$ in what follows. Let us write $\langle K1,1\rangle= \int_0^1 K1(t) \dd t$ and $K1(t)=\int_0^t K(t,s) \dd s$.
Using an expansion of the map $y \mapsto \hat{h}^y_1$ around $y = 0$ as provided in \cite{friz2021short},  we have
\[
\Sigma(y)= \sigma \bigl(\hat{h}^y_1\bigr)
= \sigma_0 + y \frac{\sigma_0'}{\sigma_0} \rho K1(1) + \cO(y^2) 
\qquad \mbox{ as } y \to 0 \,.
\]
Together with \eqref{eq:ldp:loc:vol:skew}, this implies
\[
\Sloc(t, y) \sim
 \Bigl(\frac{\sigma_0'}{\sigma_0} \rho K1(1)
 + r(y) \Bigr)  \frac 1 {t^{1/2 - H}}
\]
as $t \to 0$, where $r(y) \to 0$ as $y \to 0$. 
Similarly, from \eqref{eq:asy:bs} and a third order energy expansion of $\Lambda$, obtained in \cite[Thm 3.4]{bayer2017short} (an extension to forth order is given in \cite{friz2021short} but not required here) it follows that
\[
\Sbs (t, y) \sim
\frac{\chi(y) - \chi(-y)} {2y} 
 \frac 1 {t^{1/2 - H}}
=
 \Bigl(\frac{\sigma_0'}{\sigma_0} \rho \langle K1,1 \rangle + \ell(y) \Bigr)  \frac 1 {t^{1/2 - H}}
\]
as $t \to 0$, where $\ell(y) \to 0$ as $y \to 0$.
Therefore
\[
\frac{\Sbs(t, y)}{\Sloc(t, y)}
\to
\frac{\chi(y) - \chi(-y)}
{\Sigma(y) - \Sigma(-y)} = \frac{ \langle K1,1 \rangle}{K1(1)} + o(1)
\qquad \text{ as  } t \to 0.
\]
The identity
\[
K1(1)=	(H+3/2)
\langle K1,1\rangle
\]
for $K(t,s)=\sqrt{2H}(t-s)^{H-1/2}$ is straightforward to prove using simple integration
(we note in passing that this identity holds for any self-similar $\hat{W}$, by leveraging a representation in \cite{jost2007}). The statement of the corollary follows.
\end{proof}

The proof of Theorem \ref{th:asymptotic:markovian:projection} is based on the following representation of the Markovian projection, 
based on the integration by parts of the Malliavin calculus,
\be \label{e:condExpRepr}
\bE [V_T |X_T = y ]
=
\frac{ 
\bE \Bigl[ V_T 
\1_{X_T \geq y }
\int_0^T\frac{1}{\bar{\rho} \sigma(\hat{W}_t)  } \dd \bar{W}_t  \Bigr]
}{
\bE \Bigl[ \1_{X_T \geq y} \int_0^T\frac{1}{\bar{\rho} \sigma(\hat{W}_t) } \dd \bar{W}_t \Bigr]
}.
\ee
Representation \eqref{e:condExpRepr} is rather classical, see \cite{fournie2001applications}, though spelled out only in case $\rho = 0$, and \cite[Lemma 3]{bally:inria-00071868} for a general formula in an abstract setting. We note that $\bar \rho$ cancels as long as it is not zero, equivalently $|\rho| < 1$, which is our non-degeneracy assumption. (We kept $\bar\rho$ above to insist on this point.) 
For completeness, we give a proof in Lemma \ref{lemma:cond:exp}.

\begin{proof}[Proof of Theorem \ref{th:asymptotic:markovian:projection}]
Setting $T = \eps^2$ and using the
time-scaling property of the triple $(\bar W, \hat W, X)$, we get
\[
\begin{split}
\bE \Bigl[ \sigma(\hat{W}_T)^2 \1_{X_T \geq y T^{1/2-H}} \int_0^T\frac{1}{\bar{\rho} \sigma(\hat{W}_t)  } \dd \bar{W}_t \Bigr]
&=
\bE \Bigl[\sigma(\eps^{2H}\hat{W}_1)^2 \1_{X_1^\eps \geq y \eps^{1-2H}} \int_0^1\frac{1}{\bar{\rho} \sigma(\eps^{2H} \hat{W}_t) } \eps  \dd \bar{W}_t\Bigr]
\\
\bE \Bigl[ \1_{X_T \geq y T^{1/2-H}} \int_0^T\frac{1}{\bar{\rho} \sigma(\hat{W}_t)  } \dd \bar{W}_t \Bigr]
&=
\bE \Bigl[\1_{X_1^\eps \geq y \eps^{1-2H}} \int_0^1\frac{1}{\bar{\rho} \sigma(\eps^{2H} \hat{W}_t) } \eps  \dd \bar{W}_t \Bigr] .
\end{split}
\]
We define
\begin{equation}\label{eq:def:J}
\begin{split}
J(\eps,y) 
&= 
e^{\frac{\Lambda (y)}{\eps^{4H}}} \bE\Bigr[\sigma(\eps^{2H}\hat{W}_1)^2 \1_{X_1^\eps \geq y \eps^{1-2H}} \int_0^1\frac{1}{\bar{\rho} \sigma(\eps^{2H} \hat{W}_t) } \eps  \dd \bar{W}_t\Big] 
\\
\overline{J}(\eps,y) &= e^{\frac{\Lambda (y)}{\eps^{4H}}} \bE \Bigl[ \1_{X_1^\eps \geq y \eps^{1-2H}}  \int_0^1\frac{1}{\bar{\rho} \sigma(\eps^{2H} \hat{W}_t) } \eps  \dd \bar{W}_t\Big] 
\end{split}
\end{equation}
so that $\sigDup^2 \bigl(\eps^2, y \, \eps^{1-2H} \bigr) = \frac{J(\eps, y)}{\overline{J}(\eps,y)}$ from \eqref{e:condExpRepr}.
The implementation of an infinite-dimensional Laplace method  along the lines of \cite{friz2018precise1} allows us to determine the asymptotic behavior of $J(\eps, y)$ and $\overline{J}(\eps,y)$ as $\eps \to 0$: we postpone the details to Lemma \ref{th:limit:J} below. 
We obtain the $\eps \to 0$ limit of $\sigDup^2 \bigl(\eps^2, y \, \eps^{1-2H} \bigr)$, and therefore the statement of the Theorem, from \eqref{eq:limit:J}.
\end{proof}

\begin{lemma}\label{lemma:cond:exp}
The  representation formula \eqref{e:condExpRepr} for the conditional expectation holds for every $y \in \R$. 
\end{lemma}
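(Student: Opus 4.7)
The plan is to establish \eqref{e:condExpRepr} by a Malliavin integration-by-parts (IBP) argument in the direction of the independent Brownian motion $\bar W$. The key structural observation is that, since $V_t = \sigma^2(\hat W_t)$ is a functional of $W$ alone, only the It\^{o} integral $\bar\rho \int_0^T \sqrt{V_s}\, d\bar W_s$ in the definition of $X_T$ carries a nontrivial $\bar W$-Malliavin derivative, so
\[
D^{\bar W}_t X_T = \bar\rho \, \sigma(\hat W_t), \qquad t \in [0,T].
\]
The non-degeneracy assumption $\bar\rho \neq 0$ (equivalently $|\rho|<1$) is exactly what makes this direction usable for IBP and explains the appearance of $\bar\rho$ in the denominator of \eqref{e:condExpRepr}.

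Concretely, I would pick the hedging direction $u_t := \frac{1}{T\,\bar\rho\, \sigma(\hat W_t)}$, chosen so that $\langle D^{\bar W}X_T, u\rangle_{L^2([0,T])} \equiv 1$. Since $u$ is $W$-adapted, the Skorohod integral $\delta(u)$ coincides with the It\^{o} integral $\frac{1}{T}\int_0^T \frac{1}{\bar\rho\, \sigma(\hat W_t)}\, d\bar W_t$; the bound $1/\sigma(x) \leq c_1^{-1} e^{c_2|x|}$ from Assumption \ref{assu:sigma}, together with Gaussian moments of $\hat W$ and BDG, gives $\delta(u), V_T \in L^p$ for every $p<\infty$. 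Applying the duality
\[
\bE\bigl[\psi(X_T)\, G\, \delta(u)\bigr] = \bE\bigl[\psi'(X_T)\, \langle DX_T, u\rangle\, G\bigr] + \bE\bigl[\psi(X_T)\, \langle DG, u\rangle\bigr]
\]
to $G \in \{V_T, 1\}$, and noting that $D^{\bar W}V_T = 0$ because $V_T$ is $\sigma(W)$-measurable, one obtains, for every smooth bounded $\psi$, the two identities
\[
\bE\bigl[\psi'(X_T)\, V_T\bigr] = \bE\bigl[\psi(X_T)\, V_T\, \delta(u)\bigr], \qquad \bE\bigl[\psi'(X_T)\bigr] = \bE\bigl[\psi(X_T)\, \delta(u)\bigr].
\]

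The last step is to pass from smooth $\psi$ to $\psi = \1_{[y,\infty)}$. Approximating the Heaviside by a monotone smooth family $\psi_\eta$ with $\psi_\eta'$ concentrating at $y$, the right-hand sides converge by dominated convergence (thanks to the $L^p$ bounds). Testing both identities against an arbitrary $\varphi \in C_c^\infty(\R)$ and using Fubini, the maps $y \mapsto \bE[\1_{X_T \geq y}\, V_T\, \delta(u)]$ and $y \mapsto \bE[\1_{X_T \geq y}\, \delta(u)]$ are identified as the densities of the measures $V_T \cdot (\bP \circ X_T^{-1})$ and $\bP \circ X_T^{-1}$, respectively; their ratio, after cancellation of $1/T$, equals $\bE[V_T | X_T = y]$, which proves \eqref{e:condExpRepr}. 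The main technical obstacle is this final limiting step, namely upgrading the identity from its natural distributional form in $y$ to a pointwise statement; this requires local continuity and strict positivity of the density of $X_T$ at $y$, both of which follow from standard Malliavin regularity results applied to $X_T$, whose Malliavin matrix is non-degenerate precisely because $\bar\rho \neq 0$. In the case $\rho=0$ this reduces to the formula of \cite{fournie2001applications}; the above route extends it to the full range $|\rho|<1$.
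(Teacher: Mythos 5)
Your proof is correct and follows essentially the same route as the paper's: both compute the $\bar W$-Malliavin derivative of $X_T$, choose the weight $u_t = 1/(T\bar\rho\,\sigma(\hat W_t))$ so that $\langle D^{\bar W}X_T,u\rangle = 1$, exploit that $V_T$ is $\sigma(W)$-measurable to kill the $\langle D G,u\rangle$ term, observe that $u$ being $W$-adapted makes the Skorohod integral an It\^{o} integral, and finish by regularizing the indicator. You spell out the indicator-regularization and density-identification step in more detail, whereas the paper simply cites \cite{bally:inria-00071868} for it; the substance is the same.
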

\begin{proof}
The Malliavin derivative $\bar{D}$ of $X_T$ with respect to $\bar{W}$ is
\[
\bar{D}_t X_T = \bar{\rho} \sqrt{V}_t,
\qquad t \mino T \,,
\]
because $V$ is $W$-adapted. Consider a two-dimensional Skorohod integrable process $(0,\bar{u})$, with
\[
\bar{u}_t=\frac{1}{T \bar{D}_t X_T}
=
\frac{1}{T  \bar{\rho} \, \sqrt{V}_t}.
\]
We write $\delta$ for the Skorohod integral.
From Malliavin integration by parts formula for a bounded smooth function $\phi: \R \to \R$ and using that $\bar{D}_t V_T=0$,
we obtain
\[
\begin{aligned}
\bE
[V_T \phi(X_T) \delta(0,\bar{u})]
&=
\bE
[\langle \bar{D}(V_T \phi(X_T)),\bar{u}\rangle]
=
\bE
[V_T \phi^{\prime}(X_T) \langle \bar{D}X_T,\bar{u}\rangle]
\\
&= 
\bE
\Big[
V_T \phi^{\prime}(X_T) \int_0^T \bar{D}_t X_T \, \bar{u}_t \dd t
\Big]
=
\bE
\big[
V_T \phi^{\prime}(X_T)
\big].
\end{aligned}
\]
We have used here boundedness of $\phi(\cdot)$ and Assumptions \ref{assu:sigma} on $\sigma(\cdot)$  (see proof of next Lemma \ref{prop:algebraic:equivalence} for a detailed argument).
Moreover, since $V$ is adapted, 
we have $\delta(0,\bar{u})=\int_0^T \frac{1}{T  \bar{\rho} \sqrt{V}_t} \dd \bar{W}_t$, and so
\[
\bE
[V_T \phi^{\prime}(X_T)]
=
\frac{1}{T  \bar{\rho} }
\bE
\Big[V_T \phi(X_T) \int_0^T \frac{1}{\sqrt{V}_t} \dd \bar{W}_t
\Big]
\]
Following the same steps, one can show that the following identity also holds:
\[
\begin{split}
\bE
[
\phi^{\prime}(X_T)
]=
\frac{1}{\bar{\rho} T}
\bE
\Big[
\phi(X_T) \big( \int_0^T\frac{1}{ \sqrt{V_t}} \dd \bar{W}_t\big)
\Big].
\end{split}
\]
The representation formula \eqref{e:condExpRepr} for the conditional expectation then follows from a regularization procedure of the indicator function $\1_{X_t \ge y}$, see for example \cite{bally:inria-00071868}.
\end{proof}

\begin{lemma} \label{lem:exact_asymptotics}
For $y\in\R\setminus\{0\}$ {small enough}, one has $\int_0^1 \frac{ \dd \bar{h}^y_t}{ \sigma \left(\hat{h}^y_t \right) } \neq 0$.
Moreover, for $J,\overline{J}$ defined in \eqref{eq:def:J} we have
\label{th:limit:J}
\begin{equation}\label{eq:limit:J}
\begin{split}
J(\eps,y)&\sim \eps^{2H}
\sigma^2 (\hat{h}_1^y)
\left( \int_0^1\frac{\dd \bar{h}_t^y}{\sigma(\hat{h}_t^y) } \right)\
{ \frac{1}{\bar{\rho} \sqrt{2\pi} \sqrt{2 \Lambda(y)} } }
 \bE \left[ \exp\left(\Lambda^{\prime}
 \left( y\right)\Delta_{2}\right) \right],
\\
\overline{J}(\eps,y)&\sim
\eps^{2H}
\left( \int_0^1\frac{\dd \bar{h}_t^y}{\sigma(\hat{h}_t^y) } \right)\
{ \frac{1}{\bar{\rho} \sqrt{2\pi} \sqrt{2 \Lambda(y)} } }
 \bE \left[ \exp\left(\Lambda^{\prime}\left( y\right)\Delta_{2}\right) \right]
\end{split}
\end{equation}
as $\eps \to 0$, where $\Delta_2$ is a quadratic Wiener functional given in 
\eqref{dec:g2} (see also \cite[Equation (7.4)]{friz2018precise1}).
\end{lemma}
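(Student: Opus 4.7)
The plan is to adapt the infinite-dimensional Laplace asymptotics machinery from \cite{friz2018precise1} to the functionals $J(\eps, y)$ and $\overline J(\eps, y)$, accounting for the extra Malliavin weight $I^\eps := \int_0^1 \frac{\eps}{\bar\rho\,\sigma(\eps^{2H}\hat W_t)}\dd \bar W_t$ which appears in the representation \eqref{e:condExpRepr}. The non-vanishing claim $\int_0^1 \dd\bar h^y_t/\sigma(\hat h^y_t) \neq 0$ for $y$ small nonzero follows from a first-order expansion of $\h^y$ around $y=0$ (as already used in the proof of Corollary \ref{corollary:Hplus32}): since $\h^0 \equiv 0$ and $\h^y = y\,\h^{(1)} + O(y^2)$ with $\h^{(1)}\neq 0$ when $\rho\neq 0$, this integral is asymptotically proportional to $y/\sigma(0)$.

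Setting $\eta = \eps^{2H}$, I would perform the Girsanov shift $\W \to \W + \h^y/\eta$ inside the expectations defining $J$ and $\overline J$. Its Radon--Nikodym factor $\exp(-\eps^{-2H}\langle \h^y,\W\rangle_{H^1} - \Lambda(y)/\eps^{4H})$ cancels the explicit prefactor $e^{\Lambda(y)/\eps^{4H}}$. Combining the Lagrange multiplier identity $\langle \h^y,\cdot\rangle_{H^1} = \Lambda'(y)\,d\Phi_1(\h^y)[\cdot]$ with the Taylor expansion of the It\^o map from \cite[Sec.~7]{friz2018precise1},
\[
Z^\eps := \eps^{-1}\bigl(X_1^\eps(\W+\h^y/\eta) - y\eps^{1-2H}\bigr) = \Delta_1(\W) + \eta\,\Delta_2(\W) + o_{\bP}(\eta),
\]
with $\Delta_1(\W) = d\Phi_1(\h^y)[\W]$ centered Gaussian and $\Delta_2$ the quadratic Wiener functional of \cite[eq.~(7.4)]{friz2018precise1}, the shift exponent $-\eps^{-2H}\langle \h^y,\W\rangle = -\eps^{-2H}\Lambda'(y)\Delta_1$ becomes (using $\eps^{-2H}\eta = 1$) $-\eps^{-2H}\Lambda'(y)Z^\eps + \Lambda'(y)\Delta_2 + o(1)$, and the indicator $\1_{X_1^\eps \geq y\eps^{1-2H}}$ turns into $\1_{Z^\eps \geq 0}$.

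Simultaneously, by continuity $\sigma^2(\eta\hat W_1 + \hat h^y_1)\to \sigma^2(\hat h^y_1)$, while a direct change-of-variable yields, at leading order in $\eps$, $I^\eps(\W + \h^y/\eta) \to \frac{\eps^{1-2H}}{\bar\rho}\int_0^1 \dd\bar h^y_t/\sigma(\hat h^y_t)$. The remaining factor $\bE\bigl[\1_{Z^\eps\geq 0}\exp(-\eps^{-2H}\Lambda'(y)Z^\eps)\exp(\Lambda'(y)\Delta_2 + o(1))\bigr]$ is then evaluated by a one-dimensional Laplace integration. Parametrizing $\Delta_1 = \eta u$ and using the orthogonal splitting of $\W$ into the $\Delta_1$-direction and its complement, one finds that the Gaussian density of $\Delta_1$ at $0$ equals $\Lambda'(y)/\sqrt{2\pi\cdot 2\Lambda(y)}$ (since $\Var(\Delta_1) = 2\Lambda(y)/\Lambda'(y)^2$), and the radial integral $\int_0^\infty e^{-\Lambda'(y)u}\dd u = 1/\Lambda'(y)$ precisely cancels the $\Lambda'(y)$, yielding the prefactor $1/(\sqrt{2\pi}\sqrt{2\Lambda(y)})$ together with the conditional expectation $\bE[\exp(\Lambda'(y)\Delta_2)\,|\,\Delta_1 = 0]$, which matches the unconditional quantity $\bE[\exp(\Lambda'(y)\Delta_2)]$ in the statement once the orthogonal splitting convention of \cite[eq.~(7.4)]{friz2018precise1} is used. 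Assembling all pieces yields \eqref{eq:limit:J}.

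The main technical obstacle is to justify the exchange of limit and expectation (dominated convergence) throughout. This requires uniform-in-$\eps$ $L^p$ control of $\sigma^{\pm 2}(\eta \hat W_t + \hat h^y_t)$ and of the shifted Wiener integral $I^\eps(\W + \h^y/\eta)$, provided by the two-sided exponential bounds \eqref{eq:C1}--\eqref{eq:C2} of Assumption~\ref{assu:sigma} combined with Gaussian tail estimates for $\hat W$. The tail of $\Delta_1$ outside an $O(\eta)$ neighborhood of zero is killed by the weight $\exp(-\eps^{-2H}\Lambda'(y)\Delta_1)$ on the relevant side $\Delta_1\geq 0$, along the lines of \cite[Sec.~4]{friz2018precise1}. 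The subtlest point will be quantitative control of the $o_{\bP}(\eta)$ remainder in the Taylor expansion of $X_1^\eps$, uniformly enough to exchange limits with the singular weight $\exp(-\eps^{-2H}\Lambda'(y)\,\cdot)$; I expect this to follow from the regularity-structures / rough-path estimates developed in \cite{bayer2017regularity, friz2018precise1}.
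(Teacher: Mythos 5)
Your proposal follows essentially the same strategy as the paper's proof: Girsanov shift by $\h^y/\eps^{2H}$, stochastic Taylor expansion of the It\^o map, the optimality identity $\int_0^1 \dot h^y\,\dd W = \Lambda'(y)g_1$, a one-dimensional Laplace reduction over the Gaussian $g_1$ of variance $2\Lambda(y)/\Lambda'(y)^2$, and integrability of $\exp(\Lambda'(y)\Delta_2)$ to pass to the limit. The prefactor computation is correct. However, there are two substantive gaps that the paper works hard to close and you leave open.

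First, the decomposition of the quadratic term. You write $Z^\eps = \Delta_1 + \eta\Delta_2 + o_{\bP}(\eta)$ and treat $\Delta_2$ as if it were directly the object appearing in the statement. In the paper's notation the coefficient of $\eta$ is $g_2$, and the crucial step is the Azencott-type decomposition \eqref{dec:g2}: $g_2 = \Delta_2 + g_1\Delta_1 + g_1^2\Delta_0$ with the $\Delta_i$ independent of $g_1$ (via $\V = \W - g_1\v$, \eqref{defV}). Without this, the factor $\exp(\Lambda'(y)g_2)$ does \emph{not} split off a constant $\bE[\exp(\Lambda'(y)\Delta_2)]$. You gesture at this by invoking ``the orthogonal splitting convention of \cite[eq.~(7.4)]{friz2018precise1}'' and the conditional expectation $\bE[\exp(\Lambda'(y)\Delta_2)\mid \Delta_1=0]$, but you never explain why the cross terms $g_1\Delta_1$ and $g_1^2\Delta_0$ are negligible. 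They are not negligible pointwise; they are controlled only after the localization on $\{\eps^{2H}|||\WW|||<\delta\}$ and by the two-sided sandwich $\tilde\Delta_2^\pm = \Delta_2 \pm\delta(C+|||\VV|||^2)$ together with $\exp(|||\VV|||^2)\in L^{0+}$. This is the mechanism behind Lemma \ref{lem:BSabs} (with the parameter $\gamma_\delta$), and it is precisely the point at which a naive Laplace argument can fail.

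Second, the remainder control. You identify the exchange of limit and expectation against the singular weight $\exp(-\eps^{-2H}\Lambda'(y)\cdot)$ as the ``subtlest point'' and defer it to ``regularity-structures estimates.'' The paper's actual device is the localization $J\leadsto J_\delta$ plus Lemma \ref{prop:algebraic:equivalence}, which shows the difference is $\cO(\exp(-c/\bar\eps^2))$ and hence invisible at any algebraic order; inside the good set the remainder estimate \eqref{eq:r} from the Taylor-like expansion (Theorem \ref{thm:remainder:estimate}) gives $|r_3|\le \delta\eps^{4H}(C+|||\WW|||^2)$, which feeds directly into the sandwich. This is a concrete construction, not just a reference, and you would need to reproduce it. Finally, for the non-vanishing of $\int_0^1 \dd\bar h^y_t/\sigma(\hat h^y_t)$: the paper proves the clean exact identity $\int_0^1 \dd\bar h^y_t/\sigma(\hat h^y_t) = \bar\rho\,\Lambda'(y)$ (valid for all $|\rho|<1$, not just $\rho\neq 0$), which is both simpler and stronger than your first-order expansion around $y=0$, and the latter would still require a separate argument to guarantee $\int_0^1 \dd\bar h^{(1)}_t\neq 0$.
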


\begin{corollary}[Digital expansion] We do not use it here but we note that from the computations in  the proof of Theorem \ref{th:asymptotic:markovian:projection} and Lemma \ref{lem:exact_asymptotics},	it follows that there exists a $y_0>0$ such that the following holds for all $y \in (0,y_0)$:
\[
\bP \left( X_T \geq y T^{1/2-H} \right) \sim 
e^{-\frac{\Lambda (y)}{T}}
T^{H}
{ \frac{1}{\sqrt{2\pi}\sqrt{2 \Lambda(y)} } }
 \bE \left[ \exp\left(\Lambda^{\prime}
 \left( y\right)\Delta_{2}\right) \right],
\qquad
\text{ as  } T \to 0.
\]
\end{corollary}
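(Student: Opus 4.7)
The plan is to derive the digital expansion as a direct consequence of the sharp asymptotic of $\overline J(\eps,y)$ established in Lemma \ref{lem:exact_asymptotics}, by first extracting the density of $X_T$ via Malliavin integration by parts, and then passing from the density to the tail probability by a one-dimensional Laplace's method.

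First, I would observe that the Malliavin IBP argument of Lemma \ref{lemma:cond:exp} -- applied with $\phi$ a smooth regularization of $\1_{\cdot\geq k}$ and \emph{without} the $V_T$ factor -- produces the Bismut-type density formula
\[
p_T(k) \;=\; \frac{1}{T\bar\rho}\,\bE\Bigl[\,\1_{X_T\geq k}\int_0^T\frac{d\bar W_t}{\sigma(\hat W_t)}\Bigr]
\]
for the density $p_T$ of $X_T$. Specializing to $k = y\,T^{1/2-H}$ and invoking the time-scaling identity $X_T\eqlaw X_1^\eps$ with $\eps=\sqrt T$ (as in the proof of Theorem \ref{th:asymptotic:markovian:projection}), the right-hand side is exactly $\frac{1}{T}e^{-\Lambda(y)/\eps^{4H}}\overline J(\eps,y)$. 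Plugging in the sharp asymptotic of Lemma \ref{lem:exact_asymptotics} yields the pointwise density expansion
\[
p_T\bigl(y T^{1/2-H}\bigr) \;\sim\; \frac{1}{T}\,e^{-\Lambda(y)/\eps^{4H}}\,\eps^{2H}\Bigl(\int_0^1\frac{d\bar h^y_t}{\sigma(\hat h^y_t)}\Bigr) \frac{1}{\bar\rho\sqrt{2\pi}\sqrt{2\Lambda(y)}}\,\bE\bigl[\exp(\Lambda'(y)\Delta_2)\bigr].
\]

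Next, I would recover the tail probability by integrating: the change of variable $k = y'\,T^{1/2-H}$ gives
\[
\bP\bigl(X_T\geq y T^{1/2-H}\bigr) \;=\; T^{1/2-H}\int_y^\infty p_T\bigl(y'T^{1/2-H}\bigr)\,dy'.
\]
Because $\Lambda$ is smooth, convex and strictly increasing on $(0,\infty)$ with $\Lambda(0)=0$ and $\Lambda'(y)>0$, for small $T$ the integrand is concentrated in a boundary layer at $y'=y$; a first-order Taylor expansion $\Lambda(y')\approx\Lambda(y)+\Lambda'(y)(y'-y)$ reduces the inner integral to a standard Laplace tail that produces the factor $\eps^{4H}/\Lambda'(y)$. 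The resulting $\int_0^1 d\bar h^y_t/\sigma(\hat h^y_t)$ inherited from the pointwise constant is then identified with $\bar\rho\,\Lambda'(y)$ via the first-order optimality conditions for the rate-function minimizer $\h^y$: by the envelope theorem the Lagrange multiplier associated with the constraint $\Phi_1(\h)=y$ equals $\Lambda'(y)$, and differentiating the Lagrangian in the $\bar h$ direction gives $\dot{\bar h}^y_t = \Lambda'(y)\,\bar\rho\,\sigma(\hat h^y_t)$. This cancels the stray $1/(\bar\rho\,\Lambda'(y))$ and assembles exactly the announced prefactor $1/(\sqrt{2\pi}\sqrt{2\Lambda(y)})\,\bE[\exp(\Lambda'(y)\Delta_2)]$.

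The main obstacle is that Lemma \ref{lem:exact_asymptotics} gives the sharp expansion of $\overline J(\eps,y)$ \emph{pointwise} in $y$, whereas Laplace's method against the parameter $y'$ requires control that is locally uniform in $y'$ near $y$. I would therefore revisit the infinite-dimensional Laplace argument from \cite{friz2018precise1} underlying Lemma \ref{lem:exact_asymptotics}: since the optimal tilt $\h^{y'}$, the linear term and the quadratic form $\Delta_2$ all depend smoothly on $y'$ in a neighborhood of any reference point $y>0$, the standard error bounds go through uniformly in $y'$ on compact neighborhoods; the contribution from $y'$ far from $y$ is discarded using a crude large-deviation upper bound exploiting the strict convexity of $\Lambda$, so that only the boundary layer contributes to the leading order.
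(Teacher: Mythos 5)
Your route is genuinely different from the one the paper intends. The paper's remark ``from the computations in the proof of Theorem~\ref{th:asymptotic:markovian:projection} and Lemma~\ref{lem:exact_asymptotics}'' means: run the Girsanov--Laplace argument of Lemma~\ref{lem:exact_asymptotics} once more, but on the \emph{bare} digital $e^{\Lambda(y)/\eps^{4H}}\,\bP(X_1^\eps\ge y\eps^{1-2H})$, i.e.\ drop both the $V_T$ factor and the Malliavin weight $\int_0^1 \frac{\eps}{\bar\rho\sigma}\,\dd\bar W$. After the shift $\W \to \W + \h^y/\eps^{2H}$ this expectation is exactly the quantity $(*)$, and Lemma~\ref{lem:BSabs} gives $(*)\sim\eps^{2H}\frac{1}{\sqrt{2\pi}\sqrt{2\Lambda(y)}}\bE[\exp(\Lambda'(y)\Delta_2)]$, hence $T^H$ with $T=\eps^2$. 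This is one pass of the Laplace machinery; your detour through the density of $X_T$, the tail integral, and a second (one-dimensional) Laplace step is conceptually sound but longer, and it introduces an extra layer of ``uniformity in $y'$'' concerns that you correctly flag.

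The concrete gap is that you never verify the power of $T$, and if you do the bookkeeping with Lemma~\ref{lem:exact_asymptotics} \emph{as stated} you do not get the announced $T^H$. Collecting factors: $T^{1/2-H}$ from the change of variable $k=y'T^{1/2-H}$, $T^{-1}$ from the Bismut formula, $\eps^{2H}=T^H$ from the stated $\overline J$-asymptotics, and $T^{2H}/\Lambda'(y)$ from the one-dimensional Laplace tail, gives $T^{(1/2-H)+(H-1)+2H}=T^{2H-1/2}$, which agrees with $T^H$ only when $H=1/2$. The resolution is that the $\eps^{2H}$ in the display of Lemma~\ref{lem:exact_asymptotics} is a slip for $\eps$: the proof establishes $\limsup_{\eps\to 0}\eps^{-2H}\bigl(J_\delta/\eps^{1-2H}\bigr)\le(\cdots)$, i.e.\ $J_\delta\sim\eps\cdot(\cdots)$; equivalently, the density $p_T(y'T^{1/2-H})$ has prefactor $T^{-1/2}$, not $T^{H-1}$ (as a sanity check, only $T^{-1/2}$ is compatible with $\int p_T=1$, since the $y'$-integral of $e^{-\Lambda(y')/T^{2H}}$ contributes $T^H$). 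With this correction your chain yields $T^{(1/2-H)-1/2+2H}=T^H$ and the non-$T$ constants assemble exactly as you describe. So the idea is fine, but as written the proposal silently reproduces the typo and would therefore claim a prefactor $T^{2H-1/2}$; the power count has to be done explicitly and would have exposed the issue. (Separately, the $e^{-\Lambda(y)/T}$ in the corollary should read $e^{-\Lambda(y)/T^{2H}}$, as your $\eps^{4H}$ Laplace factor already indicates.)
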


\begin{proof}[Proof of Lemma \ref{lem:exact_asymptotics}]
We aim to apply the asymptotic results in \cite{friz2018precise1}.
Assumption (A1) in \cite{friz2018precise1} is nothing but the validity of the large deviations principle for the model defined in \eqref{eq:rvol:model}, which we have already discussed in Remark \ref{rm:condLDP}.
We take $y$ close enough to $0$ so that the non-degeneracy assumptions \cite[Assumptions (A3), (A4), (A5)]{friz2018precise1} are satisfied for the model under consideration, as it has been checked in \cite[Section 7.1]{friz2018precise1}. Therefore, we have that the preliminary regularity structures results in \cite{friz2018precise1} apply to our setting, and can employ them in the proof. 

Using \cite[Proof of Lemma 3.4, Step 1]{friz2021short} we have
\[
\int_0^1  g_t \dd \bar{h}^y_t
=\bar{\rho} \Lambda^{\prime}(y)
\int_0^1 g_t \sigma(\hat{h}^y_t)  \dd t,
\]
for any square-integrable test function $g$,
so that in particular
\[
\int_0^1 \frac{ \dd \bar{h}^y_t}{ \sigma(\hat{h}^y_t) }
=\bar{\rho} \Lambda^{\prime}(y)
\int_0^11 \dd t
=
\bar{\rho} \Lambda^{\prime}(y)
\]
and $\Lambda^{\prime}(y)\neq 0$, as detailed in the proof of \cite[Theorem 6.1]{friz2018precise1}. The  proof of \eqref{eq:limit:J} is then a modification of \cite[Proposition 8.7]{friz2018precise1}, from which we borrow the notations. Necessary definitions are recalled in Appendix \ref{appendix:reg:str}. We only prove the statement for $J$,  the one for $\overline{J}$ being completely analogous.
Set, for any $\delta>0$, 
\begin{equation}\label{def:P:delta}
            \bP_\delta (A) = \bP ( A \cap \{  \eps^{2H} ||| \WW ||| < \delta     \} ),
\end{equation}
with $\WW$ defined in \eqref{def:model}, and set
\[
\begin{split}
J_\delta(\eps,y)
= \eps^{1-2H}
e^{\frac{\Lambda(y)}{\eps^{4H}}} 
\bE_\delta \Big[\sigma(\eps^{2H}\hat{W}_1)^2 \1_{X_1^\eps \geq y \eps^{1-2H}} \big( \int_0^1\frac{1}{\bar{\rho} \sigma(\eps^{2H} \hat{W}_t) } \eps^{2H}
\dd \bar{W}_t\big)\Big]
\end{split}
\]
where the expectation $\bE_\delta$ is with respect to the sub-probability $\mathbb{P}_\delta$.
As a consequence of Lemma \ref{prop:algebraic:equivalence},
any ``algebraic expansion'' of $J$ (i.e., in powers of $\eps$) does not change by switching to $J_\delta$. 
So, proving the asymptotic behavior \eqref{eq:limit:J} for $J_\delta$ implies the statement. 

We recall \eqref{eq:rescX} and apply Girsanov's theorem, via the transformation 
\begin{equation}
\begin{split}
\eps^{2H} \W &\to \eps^{2H} \W + \h^y = \eps^{2H} (\W + \h^y / \eps^{2H}),\\
\eps^{2H} \hat{W} &\to \eps^{2H} \hat{W} + \hat{h}^y = \eps^{2H} (\hat{W} + \hat{h}^y / \eps^{2H})
\end{split}
\end{equation}
from which we introduce
\begin{equation}\label{def:Z}
\bar{Z}_1^\eps=\int_0^1\s \left( \eps^{2H} \hat W_t+\hat h_t^y\right)
\dd [ \eps^{2H} \tilde W + \tilde h^y ]_t
-\frac{\eps^{1+2H}}{2} \int_0^1\sigma ^{2}\left( \eps^{2H}
\hat W_t+\hat h_t^y\right) \dd t,
\end{equation}
with stochastic Taylor expansion \eqref{taylor:expansion}.
From Girsanov theorem, $J_\delta(\eps,y) / \eps^{1-2H}$ equals
{  
\[
\begin{split}
&=
 e^{ \frac{1}{2\eps^{4H}} \| \h^y \|^2_{H^1 }}
\bE_\delta \bigg[\sigma(\eps^{2H}\hat{W}_1)^2 
\1_{\eps^{2H-1} X_1^\eps \geq y } \big( \int_0^1\frac{1}{\bar{\rho} \sigma(\eps^{2H} \hat{W}_t) } {\eps^{2H}}  
\dd \bar{W}_t\big) \bigg]
\\
&=
\bE_\delta \bigg[ e^{-\frac{1}{\eps^{2H}} \int_0^1 \dot{h}^y \dd W }
\sigma^2 (\eps^{2H} \hat{W}_1+\hat{h}_1^y) 
\1_{\eps^{2H} g_1  + \eps^{4H} g_2   + r_3 \geq 0}  \big( \int_0^1\frac{1}{\bar{\rho} \sigma(\eps^{2H} \hat{W}_t+\hat{h}_t^y) }(  {\eps^{2H}}  
\dd \bar{W}_t + \dd \bar{h}_t^y)\big)
\bigg].
\end{split}
\]

}
Theorem \ref{thm:remainder:estimate}, applied with $\eps^{2H}$ (instead of $\eps$), gives
on $\{\eps^{2H} |||\WW|||<\delta\}$,
\[
\sigma^2 (\eps^{2H} \hat{W}_1+\hat{h}_1^y)
=
\sigma^2 (\hat{h}_1^y)
+
\ell^1_{\eps,\WW},
\]
with $|\ell^1_{\eps, \WW}| \leq C \delta$ and
\[
\int_0^1\frac{1}{\bar{\rho} \sigma(\eps^{2H}  \hat{W}_t+\hat{h}_t^y) } ( { \eps^{2H}}    \dd \bar{W}_t + \dd \bar{h}_t^y ) = 
\int_0^1\frac{1}{\bar{\rho} \sigma(\hat{h}_t^y) } \dd \bar{h}_t^y +\ell^2_{\eps,\WW}
\]
with $|\ell^2_{\eps,\WW}| \leq C \eps^{2H}|||\WW||| \leq C \delta$. Therefore,
\[
\sigma^2 (\eps^{2H} \hat{W}_1+\hat{h}_1^y)
\int_0^1\frac{1}{\bar{\rho} \sigma(\eps^{2H}  \hat{W}_t+\hat{h}_t^y) } ( { \eps^{2H}}    \dd \bar{W}_t + \dd \bar{h}_t^y ) = 
\sigma^2 (\hat h_1^y)\int_0^1\frac{1}{\bar{\rho} \sigma(\hat{h}_t^y) } \dd \bar{h}_t^y +\ell_{\eps,\WW}
\]
with $|\ell_{\eps,\WW}|\leq C\delta$.
If $\eps^{2H}|||\WW|||\leq \delta$ we also have \eqref{eq:r}, so, for fixed $\delta$, for $\eps$ small enough, 
\[
|r_3^{\eps}| \leq \delta \eps^{4H} (C+|||\WW|||^2).
\]
 We have
\begin{equation}\label{eq:reduced}
\bE_\delta[\dots]
\in
\bigg(
\sigma^2 (\hat{h}_1^y)
\int_0^1\frac{\dd \bar{h}_t^y}{\bar{\rho} \sigma(\hat{h}_t^y) }  \pm C\delta\bigg)
\bE_\delta \bigg[ e^{-\frac{1}{\eps^{2H}} \int_0^1 \dot{h}^y \dd W }
\1_{ g_1  + \eps^{2H} g_2   \pm \delta \eps^{2H} (C+|||\WW|||^2) \geq 0} 
\bigg]
\end{equation}
The optimal condition \cite[Lemma C.3]{friz2018precise1} gives $\int_0^1 \dot{h}^y \dd W = \Lambda^{\prime}(y)g_1$.
By \cite[Lemma 8.3]{friz2018precise1},
\begin{equation}\label{dec:g2}
g_2 = \Delta_2 + g_1 \Delta_1 + g_1^2 \Delta_0
\end{equation}
where the $\Delta_i$'s are independent of $g_1$. We set now, as in \cite{azencott1985petites}, the zero mean Gaussian process $\V =\V^y$ 
\begin{equation} \label{defV}
       \V_t (\omega)  := \W_t (\omega) - g_1(\omega) \v_t
\end{equation}
where $\v$ is chosen so that $\V$ is independent of $g_1$. We also define
\[
\VV (\omega) := T_{- g_1(\omega) \v} \WW (\omega)
\]
where $T$, the ``lifted'' Cameron--Martin translation, is defined in \eqref{def:T}. As in Section 8.1 of \cite{friz2018precise1}, we let
\[\begin{split}
\tilde \Delta_{0}  &:= \Delta _{0} + C \delta \| \v \|^2_{H^1}, \\
\tilde \Delta^\pm_{2} &:= \Delta _{2} \pm \delta (C+ ||| \VV |||^2), 
\end{split}\]
where $\tilde \Delta _{2}^\pm$ is also $P$-independent of $g_1$  and $\VV$. (This independence allows for conditional Gaussian computations.) We refer to \cite{friz2018precise1} for details, and here we only use that
$\eps^{2H} ||| \VV ||| \leq C \delta,$
so that
$$|\eps^{2H} \Delta_1| \leq C \eps^{2H} ||| \VV |||
 \leq C \delta, $$
when $\eps^{2H} |||\WW|||\leq \delta$. Thus, the asymptotic behavior of $J_\delta (\eps, y)$ is sandwiched by $
\sigma^2 (\hat{h}_1^y)
\int_0^1\frac{\dd \bar{h}_t^y}{\bar{\rho} \sigma(\hat{h}_t^y) }  \pm C\delta$ times $(*)$ with 
\begin{eqnarray*}
      (*)   & \in & \bE_\delta \left[ \exp \left( {  - \frac{\Lambda^{\prime}(y)g_1 }{\eps^{2H}}  } \right) \1_{  g_1+ \eps^{2H}  \tilde \Delta^\pm_{2} \pm C (1+ \tilde \Delta _{0} ) \delta |g_1| >0}
\right] .
\end{eqnarray*}
The limit of this expectation can be computed with the Laplace method. 
We prove the upper bound. Clearly,
$$
      \bE_\delta \left[ \exp \left( {  - \frac{\Lambda^{\prime}(y)g_1 }{\eps^{2H}}  } \right)
      \1_{  g_1+ \eps^{2H}  \tilde \Delta^\pm_{2} \pm C (1+ \tilde \Delta _{0} ) \delta |g_1| >0}
\right] \le \bE \big[ \cdots \big] 
$$
where $\cdots$ means the same argument. Set $\sigma_y=\sqrt{2\Lambda(y)}/\Lambda^{\prime}(y)$ and
$$
\gamma_\delta:= C(1+\tilde \Delta_0) \delta
$$
and assume that $\delta$ is small enough that $\gamma_\delta < 1$. By \cite[Theorem 6.1, part (iii)]{friz2018precise1}, 
we have $\frac{\eps^{2H}}{\Lambda^{\prime}(y) \sigma_y}>0$ and can then apply Lemma \ref{lem:BSabs} (with $N= g_1 / \sigma_y$) to see that 
$$
\bE \big[ \cdots  |\Delta_2, \mathbf{V} \big]  \leq 
\frac{\eps^{2H}}{\sigma_y \Lambda^{\prime}(y)\sqrt{2\pi} } \max\left[  e^{\frac{\Lambda^{\prime }\left( y\right)\left(\Delta_{2}+ \delta \left(C+||| \VV |||^2\right)\right)}{1-\gamma_{\delta}}}  , e^{\frac{\Lambda^{\prime }\left( y\right)\left(\Delta_{2}+ \delta \left(C+||| \VV |||^2\right)\right)}{1+\gamma_{\delta}}}\right].
$$
By \cite[Proposition 8.6 and proof of Corollary 7.1]{friz2018precise1}, $\exp\left(\Lambda^{\prime}\left( y\right)\Delta_{2}\right) \in L^{1+}$ and by \cite[Lemma 8.3 (iv)]{friz2018precise1} $\exp(||| \VV |||^2) \in L^{0+}$, so that by letting successively $\eps$ and $\delta$ go to $0$ we obtain that
\begin{equation*}
\limsup_{\varepsilon \to 0} \eps^{-2H} \bE \big[ \cdots \big]  \leq 
{ \frac{1}{\sigma_y \Lambda^{\prime}(y)\sqrt{2\pi} } }
\bE \left[ \exp\left(\Lambda^{\prime}\left( y\right)\Delta_{2}\right) \right].
 \end{equation*} 
Recalling now 
$\sqrt{2 \Lambda(y)}=\Lambda^{\prime}(y) \sigma_y$,
and the prefactor $\sigma^2 (\hat{h}_1^y) \int_0^1\frac{\dd \bar{h}_t^y}{\bar{\rho} \sigma(\hat{h}_t^y) }  \pm C\delta$ in \eqref{eq:reduced} we have the upper bound. 
The lower bound is proved in the same way using the lower bound in Lemma \ref{lem:BSabs}. 
\end{proof}

\begin{lemma}\label{prop:algebraic:equivalence}
Fix $\delta > 0$. Then there exists $c=c_{y,\delta}>0$ such that
\[
\begin{split}
     | J_\delta (\eps, y) - J (\eps, y) | = \cO(\exp (- c / \bar \eps^2)).
\end{split}
\]
\end{lemma}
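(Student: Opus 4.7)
Plan: My plan is to follow the Cameron--Martin/Girsanov approach already used in the proof of Lemma \ref{lem:exact_asymptotics}, turning the exponential prefactor $e^{\Lambda(y)/\eps^{4H}}$ into a bounded Radon--Nikodym weight, and then to exploit a Fernique-type tail estimate for the Gaussian rough path norm $|||\WW|||$. Starting from $|J-J_\delta| = e^{\Lambda(y)/\eps^{4H}}\,\bigl|\bE[\Phi_\eps\,\1_{B_\eps}\,\1_{A_\eps}]\bigr|$ with $A_\eps = \{\eps^{2H}|||\WW|||\ge \delta\}$ and $B_\eps = \{X^\eps_1\ge y\eps^{1-2H}\}$, I would apply the shift $\W\mapsto\W+\h^y/\eps^{2H}$ to obtain
\[
|J-J_\delta| = \eps^{1-2H}\,\bigl|\bE\bigl[\tilde\Phi_\eps\,e^{-\int_0^1 \dot h^y\,\dd W/\eps^{2H}}\,\1_{\tilde B_\eps}\,\1_{\tilde A_\eps}\bigr]\bigr|,
\]
where tildes indicate evaluation at the shifted Brownian motion and $\tilde A_\eps$ is the image of $A_\eps$ under the shift.

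The heart of the argument will be a uniform-in-$\eps$ bound on $\tilde F_\eps := \tilde\Phi_\eps \, e^{-\int\dot h^y\,\dd W/\eps^{2H}} \, \1_{\tilde B_\eps}$. By the optimal-condition identity $\int_0^1 \dot h^y\,\dd W = \Lambda^{\prime}(y)\, g_1$ established in \cite[Lemma C.3]{friz2018precise1}, the Radon--Nikodym factor simplifies to $e^{-\Lambda^{\prime}(y) g_1/\eps^{2H}}$. On the shifted in-the-money event $\tilde B_\eps = \{\eps^{2H}g_1 + \eps^{4H}g_2 + r_3^\eps \ge 0\}$, the remainder estimate $|r_3^\eps|\le\delta\eps^{4H}(C+|||\WW|||^2)$ (as employed in Lemma \ref{lem:exact_asymptotics}) forces $g_1 \ge -\eps^{2H}(g_2+\delta(C+|||\WW|||^2))$, so $e^{-\Lambda^{\prime}(y) g_1/\eps^{2H}} \, \1_{\tilde B_\eps}$ is upper-bounded by $e^{\Lambda^{\prime}(y)(g_2+\delta(C+|||\WW|||^2))}$, which has finite moments by Fernique and \cite[Lemma 8.3]{friz2018precise1}. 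Combined with Assumption \ref{assu:sigma} and It\^{o}'s isometry for the $\dd\bar W$-integral entering $\tilde\Phi_\eps$, this yields $\bE[|\tilde F_\eps|^2]^{1/2} \le C\eps^{2H}$ for $\eps$ small.

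Finally, Cauchy--Schwarz gives $|J-J_\delta|\le \eps^{1-2H}\,\bE[\tilde F_\eps^2]^{1/2}\,\bP(\tilde A_\eps)^{1/2}$, and the Fernique tail estimate $\bP(\eps^{2H}|||\WW|||\ge\delta) \le C\exp(-c\delta^2/\eps^{4H})$ (see \cite[Lemma 8.3 (iv)]{friz2018precise1}), modulo the constant shift $|||\hh^y|||$ arising from the triangle inequality for the translated rough path, yields $|J-J_\delta|=\cO(\exp(-c_{y,\delta}/\eps^{4H}))$. The main technical obstacle is the moment bound of the second step: the Radon--Nikodym factor alone has moments blowing up like $e^{q^2\Lambda(y)/\eps^{4H}}$, so a naive H\"older separation of this factor from the rest of the integrand would fail; the uniform bound crucially depends on the fine interplay with the in-the-money indicator $\1_{\tilde B_\eps}$ and on the quantitative remainder estimates from the proof of Lemma \ref{lem:exact_asymptotics}.
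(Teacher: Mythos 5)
Your proposal takes a genuinely different route from the paper, and it contains a gap that is not merely technical but goes to the heart of the argument.

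The paper proves this lemma \emph{without} any Girsanov shift, and hence without ever having to expand $X_1^\eps$ on the bad set $\{\eps^{2H}|||\WW|||\geq\delta\}$. It applies H\"older's inequality directly to the (un-shifted) integrand, splitting $e^{\Lambda(y)/\eps^{4H}}\,\bE\bigl[\Phi_\eps\,\1_{X_1^\eps\geq y\eps^{1-2H}}\,\1_B\bigr]$ into $e^{\Lambda(y)/\eps^{4H}}\,\eps\,\bigl(\bE|\Phi_\eps|^p\bigr)^{1/p}\,\bigl(\bE[\1_{X_1^\eps\geq y\eps^{1-2H}}\1_B]\bigr)^{1/p'}$. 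The first factor is shown to be bounded uniformly in $\eps$ from Assumption \ref{assu:sigma}, BDG, and the log-normal moment formula; for the second factor the paper simply cites the large-deviations estimate from \cite{friz2018precise1}, $\bE[\1_{X_1^\eps\geq y\eps^{1-2H}}\1_B]=\cO(e^{-\Lambda(\delta,y)/\eps^{4H}})$ with $\Lambda(\delta,y)>\Lambda(y)$, and picks $p'$ close enough to $1$ that $\Lambda(\delta,y)/p'>\Lambda(y)$. The prefactor $e^{\Lambda(y)/\eps^{4H}}$ is absorbed in the final step because the exponent in the second factor beats it, not because any Girsanov weight cancels it.

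The gap in your version: you use the bound $|r_3^\eps|\leq\delta\eps^{4H}(C+|||\WW|||^2)$ to control $e^{-\Lambda'(y)g_1/\eps^{2H}}\,\1_{\tilde B_\eps}$, but this estimate for $r_3^\eps$ is only valid on the event $\{\eps^{2H}|||\WW|||\leq\delta\}$; more fundamentally, the Taylor-like remainder estimate \eqref{eq:r} is itself only asserted \emph{uniformly on bounded sets of} $\eps^{2H}|||\WW|||$. In $|J-J_\delta|$ you are integrating precisely over the complementary event $\tilde A_\eps=\{\eps^{2H}|||\cdot|||\geq\delta\}$, where $\eps^{2H}|||\WW|||$ is unbounded and the stochastic Taylor expansion gives no control on $r_3^\eps$. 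Consequently the claimed pointwise bound $e^{-\Lambda'(y)g_1/\eps^{2H}}\1_{\tilde B_\eps}\leq e^{\Lambda'(y)(g_2+\delta(C+|||\WW|||^2))}$ and the ensuing moment bound $\bE[\tilde F_\eps^2]^{1/2}\leq C\eps^{2H}$ are unsupported exactly where you need them. (Your closing remark about the ``fine interplay with the in-the-money indicator'' correctly identifies the crux, but the fine interplay relied on is only available on the good set.) There is a secondary, minor issue with the triangle-inequality handling of the shifted Fernique estimate: the constant offset from $|||\hh^y|||$ can exceed $\delta$, in which case the tail bound degenerates; this would need a more careful argument, though it is the lesser of the two problems. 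The paper's H\"older-plus-LDP route is specifically designed to avoid both difficulties, since it never needs remainder control off the good set.
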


\begin{proof}
To this end, recall the sub-probability \eqref{def:P:delta} and introduce 
\[
B=\{  \eps^{2H} ||| \WW ||| \geq \delta     \}^c.
\]
We have
\begin{equation}
J (\eps, y) -J_\delta (\eps, y)
= \exp \left( {  \frac{\Lambda(y)}{\eps^{4H}}  } \right) 
\bE\big[\sigma(\eps^{2H}\hat{W}_1)^2 \1_{X_1^\eps \geq y \eps^{1-2H}} \big( \int_0^1\frac{1}{\bar{\rho} 
	\sigma(\eps^{2H} W^H_t) } \eps  \dd \bar{W}_t\big)\1_B\big].
\end{equation}
We have, for any $p,p'>1$ conjugate exponents,
\begin{equation}\label{eq:boundedsigma}
\begin{split}
&\bE\big|\sigma(\eps^{2H}\hat{W}_1)^2 \1_{X_1^\eps \geq y \eps^{1-2H}} \big( \int_0^1\frac{1}{\bar{\rho} \sigma(\eps^{2H} \hat{W}_t) } \eps  \dd \bar{W}_t\big)\1_B\big|
\\
&\leq \eps
\left[\bE\big|\sigma(\eps^{2H}\hat{W}_1)^2  \int_0^1\frac{1}{\bar{\rho} \sigma(\eps^{2H}  \hat{W}_t) }  \dd \bar{W}_t \big|^p\right]^{1/p}
\bE[ \1_{X_1^\eps \geq y \eps^{1-2H}} \1_B]^{1/p'}.
\end{split}
\end{equation}
The first factor can be bounded using H\"{o}lder inequality as
\[
\bE\big|\sigma(\eps^{2H}\hat{W}_1)^2  \int_0^1\frac{1}{\bar{\rho} \sigma(\eps^{2H}  \hat{W}_t) }  \dd \bar{W}_t \big|^p
\leq
(\bE\big[ \sigma(\eps^{2H}\hat{W}_1)^{2q}\big])^{1/q}
\Big(\bE\big|\int_0^1\frac{1}{\bar{\rho} \sigma(\eps^{2H}  \hat{W}_t) }  \dd \bar{W}_t \big|^{pq'}\Big)^{1/{q'}}.
\]
Since $\sigma(\cdot)$ satisfies \eqref{eq:C2}, the first factor is bounded for any $q>1$. 
{Using Burkholder--Davis--Gundy inequality, }
\[
\bE\big|\int_0^1\frac{1}{\sigma(\eps^{2H}  \hat{W}_t) }  \dd \bar{W}_t \big|^{pq'}
\leq
\bE\big|\int_0^1\frac{\dd t}{ \sigma(\eps^{2H}  \hat{W}_t)^2 }  \big|^{pq'/2}
\]
using condition \eqref{eq:C1} and the moment formula for log-normal variables,
\[
\dots \leq \bE\big|\int_0^1 \exp(c \eps^{2H}  \hat{W}_t) \dd t   \big|^{pq'/2}
\leq \bE\big| \exp(c \eps^{2H}  \hat{W}_1) \big|^{pq'/2} <\infty
\]
We conclude that the first factor in \eqref{eq:boundedsigma} is bounded by a constant, for any $p\geq1$. In \cite[lines after (8.5)]{friz2018precise1} it is shown that 
\[
\bE[ \1_{X_1^\eps \geq y \eps^{1-2H}} \1_B ]
=
\cO(e^{- \frac{\Lambda(\delta,y)}{\eps^{4H}}})
\]
with $\Lambda(\delta,y)>\Lambda(y)$. Now,
\[
\bE[ \1_{X_1^\eps \geq y \eps^{1-2H}} \1_B]^{1/p'}
=
\cO(e^{- \frac{\Lambda(\delta,y)}{p^{\prime} \eps^{4H}}})
\]
and we can choose $p'>1$ close enough to $1$ to have $\Lambda(\delta,y)/p'>\Lambda(y)$. The statement follows.
\end{proof}

\begin{lemma} \label{lem:BSabs}
Let  $\alpha \in \R$, $\gamma \in [0,1)$, $\varepsilon>0$, and $N \sim \cN(0,1)$. 
Then for some $C>0$, it holds that
\begin{align} \label{equ:BSabs}
  & \min\left[ e^{\frac{\alpha}{1-\gamma}},e^{\frac{\alpha}{1+\gamma}}\right] - \varepsilon^2 
\max\left[ e^{\frac{\alpha}{1-\gamma}}
\frac{\alpha^2-2\alpha(1-\gamma)+2(1-\gamma)^2}{(1-\gamma)^2}
,e^{\frac{\alpha}{1+\gamma}}
\frac{\alpha^2-2\alpha(1+\gamma)+2(1+\gamma)^2}{(1+\gamma)^2}
\right] \\
\label{equ:BSabs2}
  \leq  &\sqrt{2\pi} \varepsilon^{-1} \mathbb{E }\left[ \exp\left( -\varepsilon^{-1} N\right) \1_{ N + \gamma |N| + \varepsilon \alpha >0 }\right]  \\
  \label{equ:BSabs3}
 \leq  &   \max\left[ e^{\frac{\alpha}{1-\gamma}},e^{\frac{\alpha}{1+\gamma}}\right]
\end{align}
\end{lemma}

\begin{proof}
The middle expression (\ref{equ:BSabs2}) equals
$$
     \varepsilon^{-1}  \int_{-\infty}^{+\infty}  e^{-y^2/2} e^{-\frac{y}{\varepsilon}} \1_{y + \gamma |y| + \varepsilon \alpha>0} \dd y 
=   \int_{-\infty}^{+\infty}   e^{-v^2 \eps^2 /2}  e^{-v} \1_{v + \gamma | v| +  \alpha>0} \dd v.
$$
Inequalities $1 - y^2/2 \le \exp(-y^2/2) \le 1$ lead to the stated bounds. Indeed,
$$
           \text{(\ref{equ:BSabs2})} \leq   \int_{-\infty}^{+\infty}   e^{-v} \1_{v + \gamma | v| +  \alpha>0} \dd v
$$
which is computable. The right-hand side equals
\[
\begin{split}
\int_{-\frac{\alpha}{1+\gamma}}^{\infty} e^{-v} dv = e^{\frac{\alpha}{1 +\gamma}}, 
\mbox{
when } \alpha < 0 \\
\int_{-\frac{\alpha}{1-\gamma}}^{\infty}   e^{-v} \dd v 
= e^{\frac{\alpha}{1-\gamma}},
\mbox{
when }\alpha \geq 0.
\end{split}
\]
To obtain the lower bound, use $e^{-y^2/2} \geq 1 - y^2/2$ and split the integral to obtain
\[
 \text{(\ref{equ:BSabs2})} 
\geq   \int_{-\infty}^{+\infty}   e^{-v} \1_{v + \gamma | v| +  \alpha>0} \dd v  
- \frac{\varepsilon^{2}}{2} \int_{-\infty}^\infty e^{-v} v^2 \1_{v + \gamma | v| +  \alpha>0} \dd v.
\] 
The first integral is computed as before. For the second one, when $\alpha < 0$ we have
\[
e^{\frac{\alpha}{1+\gamma}}
\frac{\alpha^2-2\alpha(1+\gamma)+2(1+\gamma)^2}{(1+\gamma)^2}
\]
while  when $\alpha \geq 0$,
\[
e^{\frac{\alpha}{1-\gamma}}
\frac{\alpha^2-2\alpha(1-\gamma)+2(1-\gamma)^2}{(1-\gamma)^2}.
\]
The statement follows.
\end{proof}


\appendix

\section{Elements of regularity structures for rough volatility}\label{appendix:reg:str}

\noindent This appendix is based on \cite{bayer2017regularity,friz2018precise1}. 
We have an fBm $\hat W= K^H * \dot{W} $ of Hurst parameter $H$.
Let $M$ be the smallest integer such that
$ (M+1)H-1/2>0$ and then pick $\kappa$ small enough such that 
\begin{align}  \label{eq:condonkappa}
(M+1)(H-\kappa)-1/2-\kappa >0\,.
\end{align}
When $H=1/2$, we have $M=1$ and so $1/2-\kappa \in (1/3,1/2)$. This corresponds to 
the rough path case. 
More generally, we work with an enhancement of the Brownian noise $(W, \bar W)$, also known as a model of the form
\begin{equation}\label{def:model}
\WW (\omega) =  \left(W, \bar W, \hat W, \int \hat W \dd W , \int \hat W \dd \bar W,  \int \hat W^2 \dd W, \cdots, \int \hat W^M  \dd \bar W \right),
\end{equation}
with \emph{homogeneous model norm}\footnote{In fact, $\| W \|_{1/2-\kappa} \asymp   \| \hat W \|_{H-\kappa}$ by Schauder so that including $\hat W$ is mildly redundant.}$$
        ||| \WW |||  :=    \| W \|_{1/2-\kappa} + \| \bar W \|_{1/2-\kappa} +  \| \hat W \|_{H-\kappa} + \dots + \|  \int \hat W^M  \dd \bar W \|^{1/3}_{M(H-\kappa)+1/2 - \kappa}
$$
where  $\| \cdot|_{1/2-\kappa}$ are classical, resp. $2$-parameter, H\"older (semi)norms. 
One naturally defines, with $\h = (h,\bh)\in H^1$ and $\hat h=K^H*h$
\begin{equation}\label{def:T}
         T_{\h}  (\WW) =  \left(W+h, \bar W+ \bh, \hat W+ \hat h, \int (\hat W +  \hat h) d(W+\bh), ... \right) \ .
\end{equation}
Also, recall from \cite{bayer2017regularity} that there is a well-defined dilation $\delta_\eps$ acting on models. Formally, it is obtained by replacing 
each occurrence of $W, \bar W, \hat W$ with $\eps$ times that quantity:
 $$
          \delta_\eps \WW = \left(\eps W, \eps \bar{W},\eps \hat{W}, \eps^2 \int \hat W \dd W,  \eps^3 \int \hat W^2 \dd W , ....\right) \in \M ,
 $$
 where $\M$ is the space of models.
As a consequence, dilation works well with homogeneous model norms,
$$
 ||| \delta_\eps \WW ||| = \eps ||| \WW ||| \ .
$$

\begin{theorem}[Stochastic Taylor-like expansion]\label{thm:remainder:estimate}
Let $f$ be a smooth function. Fix $\h \in H^1$ and $ \eps >0$. If $\WW$ is a model, then so is $T_{\h} ( \delta_{\eps} \WW)$. 
The path-wise ``rough/model'' integral 
$$
\Psi(\eps) :=  \int_{0}^1 f \left( \eps \hat W_t + \hat{h}_t \right) 
\dd (T_{\h} ( \delta_{\eps} \WW))_t \ 
$$
is well-defined, continuously differentiable in $\eps$, and we have the estimates
\[
\begin{split}
|f(\eps\hat W_1 + \hat h_1) -
f(\hat h_1)|&=\mathcal{O}(\eps |||\WW||| ), \\
| \Psi(\eps) - \Psi(0) | &= \mathcal{O} (\eps ||| \WW ||| ),
\end{split}
\] 
valid on bounded sets of $\eps ||| \WW |||$.
\end{theorem}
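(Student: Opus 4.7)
The plan is to view this theorem as a specialization, to the rough-volatility regularity structures setting of \cite{bayer2017regularity,friz2018precise1}, of general continuity and stability estimates for modelled distributions and their reconstructions under Cameron--Martin translations and dilations.

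First I would check that $T_\h(\delta_\eps \WW)$ is indeed a model. The dilation $\delta_\eps$ preserves the algebraic relations and rescales the model norm homogeneously as $|||\delta_\eps \WW||| = \eps |||\WW|||$. The translation $T_\h$ with $\h \in H^1$ acts smoothly on models, because $\hat h = K^H * \dot h$ is essentially $H$-Hölder and the iterated integrals involving $\hat h$ against the rough noise components reduce to classical Young or Riemann--Stieltjes integrals; writing out the translation formulas for the iterated integrals then yields a quantitative bound of the form $|||T_\h(\delta_\eps \WW)||| \le C_\h (1 + \eps |||\WW|||)^M$.

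Second I would define $\Psi(\eps)$ via the abstract reconstruction theory. Taylor-expanding $f(\eps \hat W_t + \hat h_t)$ around $\hat h_t$ up to order $M$, with $M$ chosen exactly so that $(M+1)(H-\kappa) - 1/2 - \kappa > 0$, realizes $f(\eps \hat W + \hat h)$ as a modelled distribution against $T_\h(\delta_\eps \WW)$ whose Taylor remainder has high enough Hölder regularity to be paired against the driving noise. The rough integral is then defined via reconstruction as in \cite{bayer2017regularity}; well-definedness and continuous differentiability in $\eps$ follow because each building block in the expansion depends polynomially on $\eps$ and reconstruction is continuous in the model.

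Third, for the two explicit estimates: the pointwise bound is immediate, since $|\hat W_1| \le \|\hat W\|_{H-\kappa} \le |||\WW|||$ and a first-order Taylor expansion on the compact set where $\eps |||\WW|||$ is bounded yields $|f(\eps \hat W_1 + \hat h_1) - f(\hat h_1)| \le \|f'\|_{L^\infty}\, \eps |||\WW|||$. For the integral estimate I would split
\[
\Psi(\eps) - \Psi(0) = \int_0^1 \bigl[f(\eps \hat W_t + \hat h_t) - f(\hat h_t)\bigr]\,\dd \tilde h_t + \eps \int_0^1 f(\eps \hat W_t + \hat h_t)\,\dd \tilde W_t,
\]
where the first term is $O(\eps |||\WW|||)$ by the uniform-in-$t$ version of the pointwise estimate combined with $\tilde h \in H^1 \subset BV$, while the second, genuinely rough, term is controlled by the reconstruction-type continuity bounds from \cite{bayer2017regularity,friz2018precise1} applied to the modelled distribution $f(\eps \hat W + \hat h)$ paired against the noise; these yield a bound of order $\eps$ times a polynomial in $\eps |||\WW|||$, which is uniformly bounded on the prescribed sets. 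The main obstacle is this last step, where the rough integral estimate hinges on the quantitative reconstruction bound tailored to the rough-volatility regularity structure --- which is precisely where the Taylor order $M$ and the Hölder exponent $\kappa$ play their role.
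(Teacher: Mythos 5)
Your argument follows essentially the same route as the paper's: the published proof is a one-line pointer to Theorem B.6 of \cite{friz2018precise1} (``just stop the expansion at the first order''), and what you write out is precisely the internal mechanics of that result --- a Taylor expansion of $f$ of order $M$ against the translated and dilated model $T_{\h}(\delta_\eps\WW)$, with well-definedness and $C^1$-dependence on $\eps$ supplied by reconstruction, truncated at first order to yield the two estimates. One small imprecision worth fixing: for the rough term $\eps\int_0^1 f(\eps\hat W_t+\hat h_t)\,\dd\tilde W_t$ you conclude a bound of order $\eps$ times a polynomial in $\eps|||\WW|||$, hence $O(\eps)$ on the prescribed sets, whereas the theorem claims the sharper $O(\eps|||\WW|||)$; this extra factor of $|||\WW|||$ is recovered by noting that the rough integral vanishes when $\WW=0$, so the reconstruction continuity bound actually reads $\bigl|\int_0^1 f(\eps\hat W+\hat h)\,\dd\tilde W\bigr|\lesssim |||\WW|||\,(1+\eps|||\WW|||)^k$, which on bounded sets of $\eps|||\WW|||$ yields $O(\eps|||\WW|||)$ as required.
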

\begin{proof}
As in \cite[Theorem B.6]{friz2018precise1}, just stop the expansion at the first order.
\end{proof}

\begin{lemma}
Let $\bar{Z}_1^\eps$ be defined in \eqref{def:Z} and recall $\hat \eps \equiv \eps^{2H}$. Then
\begin{equation}\label{taylor:expansion}
\bar{Z}_1^\eps=  g_0 + \hat \eps g_1 (\omega) + \hat \eps^2 g_2 (\omega) + r_3(\omega)
\end{equation}
with $g_0=y$,
\begin{align}
 \label{def:g1} 
g_1 
&= \int_0^1 \sigma^{\prime}(\hat{h}_s^y) \hat{W}_s \dd \tilde{h}_s^y 
+ \int_0^1 \sigma(\hat{h}_s^y) \dd \tilde{W}_s,
\\
\label{def:g2}
g_2 
&=
 \frac{1}{2}\int_0^1 \sigma^{\prime\prime}(\hat{h}_s^y) \hat{W}^2_s \dd \tilde{h}_s^y + \int_0^1\sigma^{\prime}(\hat{h}_s^y) \hat{W}_s \dd \tilde{W}_s
\\
\label{eq:r}
 | r_3 (\omega ) | & \le O (\eps^{6H}  ||| \WW |||^3) + O (\eps^{1+2H}),
\quad\mbox{uniformly on bounded sets of $\eps^{2H} ||| \WW |||$.}\end{align}
\end{lemma}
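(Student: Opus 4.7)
The plan is to Taylor-expand the integrand $\sigma(\hat\eps \hat W_s + \hat h_s^y)$ (with $\hat\eps = \eps^{2H}$) to second order around $\hat h_s^y$ and then substitute into the two terms constituting $\bar Z_1^\eps$ in \eqref{def:Z}. Writing
\[
\sigma(\hat\eps \hat W_s + \hat h_s^y) = \sigma(\hat h_s^y) + \hat\eps\, \sigma'(\hat h_s^y) \hat W_s + \tfrac{\hat\eps^2}{2} \sigma''(\hat h_s^y) \hat W_s^2 + R(\eps,s),
\]
with Taylor remainder $R(\eps,s) = O(\hat\eps^3 |\hat W_s|^3)$ on bounded sets, I will integrate term by term against $\hat\eps\, \dd\tilde W_s + \dd\tilde h_s^y$ and collect powers of $\hat\eps$.

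The zeroth-order contribution $\int_0^1 \sigma(\hat h_s^y)\, \dd\tilde h_s^y$ equals $y$ because $\h^y$ is the minimizer of the rate function \eqref{e:rateF} subject to $\Phi_1(\h^y) = y$; note that the drift term in \eqref{eq:rescX} is $O(\eps^2)$ and does not contribute at this order, so $g_0 = y$. The $\hat\eps$-terms combine the zeroth Taylor coefficient integrated against $\hat\eps\, \dd\tilde W_s$ with the first Taylor coefficient integrated against $\dd\tilde h_s^y$ to yield exactly \eqref{def:g1}; the $\hat\eps^2$-terms combine the second Taylor coefficient against $\dd\tilde h_s^y$ with the first Taylor coefficient against $\hat\eps\, \dd\tilde W_s$ to reproduce \eqref{def:g2}.

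The remainder $r_3$ then collects three pieces: (i) the Taylor remainder $\int_0^1 R(\eps,s)(\hat\eps\, \dd\tilde W_s + \dd\tilde h_s^y)$; (ii) the cross-term $\tfrac{\hat\eps^2}{2}\sigma''(\hat h_s^y)\hat W_s^2$ integrated against $\hat\eps\, \dd\tilde W_s$, which is of order $\hat\eps^3$; and (iii) the quadratic drift $-\tfrac{\eps^{1+2H}}{2}\int_0^1 \sigma^2(\hat\eps\hat W_s + \hat h_s^y)\,\dd s$. Under Assumption \ref{assu:sigma} the derivatives of $\sigma$ are controlled on the relevant range, and Theorem \ref{thm:remainder:estimate} applied to the smooth functions $f = \sigma^{(k)}$ lets me bound the model integrals of $\hat W^k$ against $\dd \tilde W$ via the homogeneous model norm $|||\WW|||$. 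This yields $|r_3(\omega)| \le O(\eps^{6H}|||\WW|||^3) + O(\eps^{1+2H})$ uniformly on bounded sets of $\eps^{2H}|||\WW|||$: the first bound collects (i) and (ii), the second collects (iii).

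The hard part is the definition and uniform control of the iterated integrals $\int_0^1 \sigma'(\hat h_s^y)\hat W_s\,\dd\tilde W_s$ appearing in $g_2$, and analogously $\int_0^1 \sigma''(\hat h_s^y)\hat W_s^2\,\dd\tilde W_s$ in the $\hat\eps^3$ part of $r_3$: when $H < 1/2$, $\hat W$ and $\tilde W$ are not jointly regular enough for classical Young integration, so these objects must be interpreted via the lifted noise $\WW$ in \eqref{def:model}, whose iterated-integral components are precisely $\int \hat W\,\dd W, \int \hat W\,\dd\bar W, \int \hat W^2\,\dd W, \ldots$, and then bounded through $|||\WW|||$ using the Schauder/reconstruction estimates that underlie Theorem \ref{thm:remainder:estimate}. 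Once that machinery is in place, the expansion reduces to bookkeeping of powers of $\hat\eps$, with the three contributions to $r_3$ described above recombined into the stated $O(\eps^{6H}|||\WW|||^3) + O(\eps^{1+2H})$ bound of \eqref{eq:r}.
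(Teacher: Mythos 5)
Your proposal is correct and follows essentially the same route as the paper: isolate the $O(\eps^{1+2H})$ quadratic-drift term, Taylor-expand $\sigma(\hat\eps\hat W + \hat h^y)$ around $\hat h^y$ to second order, collect powers of $\hat\eps = \eps^{2H}$ to identify $g_0 = \Phi_1(\h^y) = y$, $g_1$, $g_2$, and bound the third-order remainder via the rough/model integrals controlled by $|||\WW|||$. The paper simply compresses this bookkeeping by citing \cite[Theorem B.6]{friz2018precise1} (with $\eps$ replaced by $\eps^{2H}$) for the stochastic Taylor-like expansion, which is exactly the mechanism you unpack.
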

\begin{proof}
Directly from \eqref{def:Z},
$$
\bar{Z}_1^\eps=\int_0^1\s \left( \eps^{2H} \hat W_t+\hat h_t^y\right)
\dd [ \eps^{2H} \tilde W + \tilde h^y ]_t
+ O (\eps^{1+2H})
$$
uniformly on bounded sets of $\eps^{2H} |\tilde W|$, and hence on bounded sets of $\eps^{2H} ||| \WW |||$.
From \cite[Theorem B.6]{friz2018precise1}, applied with $\eps$ replaced by $\eps^{2H}$, and then again uniformly on bounded sets of $\eps^{2H} ||| \WW |||$
we arrive at the error estimate, 
$$
              | r_3 (\omega ) |  \le O (\eps^{6H}  ||| \WW |||^3) + O (\eps^{1+2H}),
$$
valid uniformly on bounded sets of $\eps^{2H} ||| \WW |||$.
\end{proof}

\end{document}